\documentclass[12pt,letterpaper]{article}

\usepackage{amssymb,amsmath,amssymb,bbm,mathtools} 

\usepackage[margin=1in]{geometry}
\usepackage{times}

\usepackage[ruled]{algorithm2e} 

\usepackage{amsthm}
\usepackage{thmtools,thm-restate}


\usepackage[numbers]{natbib}

\setcitestyle{acmnumeric}
 
\PassOptionsToPackage{hyphens}{url}

\usepackage{eqparbox,array}
\usepackage{nicefrac}
\usepackage{xspace}

\newtheorem{theorem}{Theorem}[section]
\newtheorem{lemma}[theorem]{Lemma}

\newtheorem{corollary}[theorem]{Corollary}

\newtheorem{example}[theorem]{Example}

\newtheorem{observation}[theorem]{Observation}

\newtheorem*{theorem*}{Theorem}

\newtheorem{definition}[theorem]{Definition}

\usepackage[capitalise,noabbrev]{cleveref}

\makeatletter
\renewcommand{\paragraph}{%
  \@startsection{paragraph}{4}%
  {\z@}{1.0ex \@plus 1ex \@minus .2ex}{-1em}%
  {\normalfont\normalsize\bfseries}%
}
\let\oldnl\nl
\newcommand{\nonl}{\renewcommand{\nl}{\let\nl\oldnl}}
\makeatother

\usepackage{enumerate}
\usepackage{enumitem}
\usepackage[T1]{fontenc}
\usepackage{bbm}
\usepackage{comment}


\renewcommand{\hat}{\widehat}

\DeclareMathOperator*{\argmin}{argmin}

\newcommand{\NSW}{\operatorname{NSW}}
\newcommand{\W}{\operatorname{W}}
\theoremstyle{remark}

\newcommand{\Real}{\mathbb{R}}

\newcommand{\sw}{\operatorname{SW}}

\Crefname{ALC@unique}{Line}{Lines}

\usepackage[breakable, theorems, skins]{tcolorbox}
\DeclareRobustCommand{\mybox}[2][gray!20]{%
	\begin{tcolorbox}[   
		breakable,
		left=0pt,
		right=0pt,
		top=0pt,
		bottom=0pt,
		colback=#1,
		colframe=#1,
		enlarge left by=0mm,
		boxsep=7pt,
		arc=0pt,outer arc=0pt,
		]
		#2
	\end{tcolorbox}
}

\title{\bfseries Two Birds With One Stone:\\ Fairness and Welfare via Transfers}

\author{
	Vishnu V. Narayan\thanks{McGill University. {\tt vishnu.narayan@mail.mcgill.ca}}
	\and
	Mashbat Suzuki\thanks{McGill University. {\tt mashbat.suzuki@mail.mcgill.ca}}
	\and
	Adrian Vetta\thanks{McGill University. {\tt adrian.vetta@mcgill.ca}}
}
\date{}

\begin{document}	

\maketitle

\begin{abstract}
We study the question of dividing a collection of indivisible goods amongst a set of agents. The main objective of 
research in the area is to achieve one of two goals: fairness or efficiency. On the fairness side, \textit{envy-freeness} is 
the central fairness criterion in economics, but envy-free allocations typically do not exist when the goods are indivisible. 
A recent line of research shows that envy-freeness \textit{can} be achieved if a small quantity of a homogeneous divisible 
good (money) is introduced into the system, or equivalently, if transfer payments are allowed between the agents. A natural 
question to explore, then, is whether transfer payments can be used to provide high \textit{welfare} in addition to envy-freeness, and if 
so, how much money is needed to be transferred.

We show that for general monotone valuations, there always exists an allocation with transfers that is envy-free and whose 
Nash social welfare (NSW) is at least an $e^{-1/e}$-fraction of the optimal Nash social welfare. Additionally, when the agents have 
additive valuations, an envy-free allocation with negligible transfers and whose NSW is within a constant factor 
of optimal can be found in polynomial time. Consequently,  we demonstrate that the seemingly incompatible objectives of fairness and high welfare 
can be achieved simultaneously via transfer payments, even for general valuations, when the welfare objective is NSW. On the other hand, we show that a similar result is impossible for utilitarian 
social welfare: \textit{any} envy-freeable allocation that achieves a constant fraction of the optimal welfare requires non-negligible transfers. 
To complement this result we present algorithms that compute an envy-free allocation with a given target welfare and with bounded transfers.
\end{abstract}

\section{Introduction}\label{sec: intro}

The question of how to divide a collection of items amongst a group of agents has remained of central importance 
to society since antiquity. Real-world examples of this problem abound, ranging from the division of land and inherited 
estates, border settlements, and partnership dissolutions, to more modern considerations such as the division of the 
electromagnetic spectrum, distribution of computational resources, and management of airport traffic. The predominant objective 
of research in this area is to study the existence of allocations that achieve one of two broad goals: \textit{fairness} 
or \textit{efficiency}. At a high level, the fairness goal is to ensure that each agent receives its due share of the items, 
and the efficiency goal is to distribute the items in a way that maximizes the aggregate utility achieved by all of the agents.

The study of fair division burgeoned in the decades following its formal introduction by Banach, Knaster and Steinhaus~\cite{Ste48}, and 
most of the early literature focused on the \textit{divisible} setting, where a single heterogeneous divisible item (conventionally, a cake) is 
to be fairly shared among a set of agents with varying preferences over its pieces. The second half of the last century saw the creation of 
precise mathematical definitions for various fairness notions, and \textit{envy-freeness}, where every agent prefers its piece to any piece 
received by another agent, has since emerged as the dominant fairness criterion in economics. Early non-constructive results proved that, 
under mild assumptions, envy-free allocations always exist in the divisible setting~(\citet{Str80,Su99}), and ensuing work produced finite 
and bounded protocols for computing these allocations~(\citet{BT95,AM16}).

More recently, research has focused on the \textit{indivisible} setting, where each item in a collection must be allocated as a whole to 
some agent. It appears at first that envy-freeness cannot be achieved in this setting; consider the simple example of two agents and 
one item, where one agent is left envying the other in any allocation. Consequently, a common theme in the indivisible setting is the 
study of weaker fairness guarantees such as \textit{EF1} and \textit{approximate-MMS}~\cite{LMM04,Bud11,KPW18}.

But is it necessary to restrict ourselves to these weaker guarantees? A recent line of research shows, rather surprisingly, that it is 
possible to achieve canonical envy-freeness even in the indivisible setting simply by adding to the system a small quantity of 
a \textit{divisible good}, akin to money~\cite{HS19,BDN20}, or equivalently by allowing the agents to make transfer payments between themselves. These transfer payments can always be made alongside an allocation of the indivisible items such that the result is envy-free. In this work, we ask and answer a natural follow-up question: 
can this tool be made to do more? Can we use it to simultaneously guarantee full envy-freeness while also 
achieving high \textit{welfare}, and if so, how much in total transfer payments do we need for this?


\subsection{Related Work}

The formal origin of fair division dates back to the 1940s, when Banach, Knaster and Steinhaus~\cite{Ste48} 
devised the Last Diminisher procedure to fairly divide a cake among $n$ agents. Their fairness objective was \textit{proportionality}, in 
which each agent receives a piece of value at least $\frac{1}{n}$ of the value of the entire cake to that agent. The pursuit of proportional 
cake divisions in different settings led to the creation of popular algorithmic paradigms for cake-cutting such as the moving-knives 
procedures~\cite{DS61,Str80,RW98}. In the following decades, \textit{envy-freeness}~(\citet{GS58,Fol67}) emerged as the canonical 
fairness solution. When the valuation functions are additive, envy-freeness implies proportionality and is therefore a stronger fairness 
property. A chain of subsequent results culminated in the discovery of finite-time~\cite{BT95} and bounded-time~\cite{AM16} protocols for 
finding envy-free allocations in the divisible setting.

The research efforts of the fair division community have undergone two major shifts in recent years. The first of these is an increased 
focus on \textit{economic efficiency}. The most common type of economic efficiency is \textit{Pareto efficiency}, in which no agent's 
allocation can be improved without making some other agent worse off. A classical result of \citet{Var74} shows that in the divisible 
setting there always exists an allocation that is both envy-free and Pareto efficient. In fact, such an allocation can be computed in 
polynomial time~\cite{DPS08}. A different notion of efficiency arises when we maximize a \textit{welfare function} that measures the 
aggregate utility of all agents. The most common welfare functions studied in the associated literature are the \textit{utilitarian} social 
welfare (or simply the social welfare), which measures the sum of the agents' valuations, and the \textit{Nash} social welfare, which 
measures the geometric mean of these valuations. In the divisible setting, \citet{BCH12} and \citet{CLP11} study the computational 
problem of maximizing the social welfare under proportionality and envy-freeness constraints respectively.

The second shift is towards the study of the indivisible setting, where $m$ items are to be integrally divided amongst $n$ agents. 
Since neither envy-freeness nor proportionality can now be guaranteed, a natural alternative is to provide relaxations or 
approximations of them. One such relaxation is the \textit{EF$k$} guarantee. An allocation is \textit{envy-free up to $k$ goods}, 
or EF$k$, if no agent envies another agent's bundle provided some $k$ goods are removed from that bundle. The EF1 guarantee is 
particularly notable, as EF1 allocations exist and can be computed in polynomial time if the valuation functions are monotone~\cite{LMM04}. 
Two similar relaxations exist for proportionality, namely the \textit{Prop1} guarantee and the \textit{maximin share} guarantee, the latter of 
which is a natural extension of the two-agent cut-and-choose protocol~\cite{Bud11}. A large body of research produced over the last 
decade aims to achieve these guarantees or approximations thereof~\cite{KPW18,GHS18,BCF19,GT20}, including many results that 
show that these fairness guarantees can be achieved alongside Pareto efficiency~\cite{BKV18} or high Nash social 
welfare~\cite{BKV18, CKM19, GM19}.

The problem of achieving high utilitarian social welfare under fairness constraints was formally introduced by \citet{CKK09}. The \textit{price of 
fairness} (that is, of envy-freeness, EF1, or any other fairness criterion) of an instance is defined as the ratio of the social welfare of an 
optimal allocation without fairness constraints, to the social welfare of the best fair allocation. Intuitively, it measures the necessary 
worst-case loss in efficiency when we add fairness constraints. \citet{CKK09} present bounds on the price of fairness (proportionality, 
envy-freeness and equitability) in both the divisible and indivisible settings; we remark, however, that their results for the indivisible 
case only consider the special set of instances for which the associated fair allocations exist. For the divisible setting, \citet{BFT11} 
showed that the bounds of \cite{CKK09} are tight. Followup work on the price of fairness in the indivisible setting by \citet{BLM19} 
and \citet{BBS20} considers only the relaxed fairness guarantees (such as EF1 and $\frac{1}{2}$-MMS) that are always achievable in the indivisible setting.

In now classical work, \citet{Sve83}, \citet{Mas87}, and \citet{TT93} studied the indivisible item setting and asked if it is always 
possible to achieve an envy-free allocation simply by introducing a small quantity of a divisible good, akin to money, alongside the 
indivisible items. Their positive results were mirrored in followup work by \citet{ADG91,Ara95,Kli00} and \citet{HRS02} which showed 
for various settings the existence of an envy-free \textit{allocation with subsidy}. However, all of the above papers considered the 
restricted case where the number of items, $m$, is at most the number of agents $n$. It was only recently that \citet{HS19} extended 
these results to the general $m$-item setting, showing that an envy-free allocation with subsidy always exists in general. \citet{BDN20} 
followed this up with upper bounds on the amount of money sufficient to support an envy-free allocation in all instances. Surprisingly, 
when the valuation functions are scaled so that the marginal value of an item is at most one dollar to any agent, at most $n-1$ dollars 
in the additive case and at most $O(n^2)$ dollars in the general monotone case are always sufficient to eliminate envy~\cite{BDN20}.
Note that the maximum required subsidy is {\em independent} of the number $m$ of items, an observation of particular relevance to our work. Several recent papers study the problem of achieving envy-freeness alongside other properties via subsidies and transfers, including \citet{Azi20,GIK21}.

\subsection{Results and Contributions}

A salient question is whether the two ideas exposited in the prior discussion can be combined: is it 
possible to find an allocation with subsidy that is simultaneously envy-free \textit{and} guarantees high welfare? If so, how 
much subsidy is sufficient to achieve this? These questions are the focus of this paper.


Thus, one contribution of our work is to extend the literature on subsidies and their application.
However, rather than subsidies, we analyze the related concept of {\em transfer payments} between the agents for two reasons.
First, a subsidy is an external source of added utility which, in the context of welfare, would bias any subsequent 
comparisons with the welfare-maximizing allocation \textit{without} subsidies. A transfer payment is neutral in this regard.
Second, subsidies require an external agent willing to fund the mechanism -- a typically unrealistic hope. In contrast, transfer 
payments require the consent only of the agents who are already willing participants in the mechanism. Provided the cost of the payments are outweighed by the benefits of participation then giving consent is reasonable.
We remark that subsidies and transfers are in a sense interchangeable. Given an envy-free allocation with subsidies, subtracting the 
average subsidy from each agent's individual payment gives payments which sum to zero, that is, transfer payments. Conversely,
given transfer payments, adding an appropriate fixed amount to each payment induces non-negative subsidy 
payments.\footnote{Of course, whilst the correspondence between subsidies and transfers is simple, the switch to transfer payments does have 
a technical drawback: because transfer payments do not provide an (unnatural) external boost to welfare, obtaining welfare guarantees for the 
case of transfers is generally harder than for the case of subsidies.}

A second contribution is to extend the research on the price of fairness. 
Specifically, we impose no {\em balancing constraint} on the valuation functions of the agents. To understand this,
note that a common assumption in the price of fairness literature is that the
valuation function of each agent is scaled so that the value of the grand bundle of items is \textit{equal} for all agents. 
In the context of fairness, this scaling is benign because it has no affect on the most widely used measures of fairness. For example, it 
does not change the (relative) envy between any pair of agents. Indeed, this assumption is inspired by the literature on envy-free cake-cutting, 
where the value of the entire cake is usually assumed to be $1$ for every agent. 
However, in the context of efficiency or welfare, this scaling is {\em not} benign.
Such a scaling can dramatically alter the welfare of any allocation. In particular, this scaling restricts attention only to balanced
instances, where agents are of essentially equal importance in generating welfare. This is important because it is the elimination of unbalanced
instances that allows non-trivial bounds on the price of fairness to be obtainable~(\cite{BLM19,BBS20}).
Indeed, as will be seen in this paper, it is the unbalanced instances that are typically the most problematic in obtaining both fairness and
high welfare. So, as stated, we study case of general valuation functions with no balancing constraints.

We do, however, make the standard assumption in the literature on subsidies (\cite{HS19, BDN20}), and assume that the maximum marginal value for an item for any 
agent is always at most one dollar. We emphasize that this assumption induces no loss of generality with regards to the valuation functions. 
In particular, this assumption is benign in respect to both fairness and welfare. 
It does not affect the relative envy between agents, and it does not affect the welfare of an allocation (as all valuations can be scaled down uniformly). We make this assumption for simplicity, since expressing the transfers in dollar amounts allows for a consistent comparison with earlier work on the topic. Equivalent bounds for the original instance can be recovered by multiplying these expressions by the maximum marginal value of an item for any agent.


We now present the main results that appear in the paper. We study the trade-off between fairness and efficiency in the presence of 
transfer payments for the class of $\rho$-mean welfare functions, with particular focus on the two most important special cases, namely the 
Nash social welfare and utilitarian social welfare functions. An allocation is \textit{envy-freeable} if it can be made envy-free with the addition of subsidies (or, equivalently, transfer payments). Our first 
observation is that to achieve both fairness and high welfare, it is not sufficient to simply find an envy-freeable allocation -- making transfer 
payments is necessary. In fact, no non-zero welfare guarantee is achievable for all $\rho$ without considering transfers in the computation of the welfare.
Letting $\W^{\rho}$ denote $\rho$-mean welfare, we have:
\begin{observation}\label{obs: transfers}
For any $\epsilon>0$, there exist instances where the welfare of every envy-freeable allocation $A$ satisfies
	$$
	\frac{\W^{\rho}(A)}{\W^{\rho}(A^*)} \le \epsilon 
	$$ 
\end{observation}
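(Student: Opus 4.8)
The plan is to build an \emph{unbalanced} instance in which the condition for envy-freeability clashes with the balance that Nash social welfare (and, more generally, $\rho$-mean welfare for $\rho \le 0$) rewards. I will use the characterization of \citet{HS19}: an allocation $A$ is envy-freeable if and only if its assignment of bundles to agents maximizes the utilitarian total $\sum_i v_i(A_i)$ over all reassignments of those bundles (equivalently, its envy graph has no positive-weight cycle). Envy-freeability is therefore at heart a utilitarian, max-weight-matching requirement, so I will engineer an instance where the welfare-optimal allocation $A^*$ deliberately sacrifices utilitarian total in order to keep a weak agent satisfied -- making $A^*$ itself not envy-freeable -- while \emph{every} envy-freeable allocation, being forced to chase utilitarian total, strips that agent down to zero value.

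Concretely, first I would take two agents and two goods $g_1, g_2$ with $v_1(g_1) = 1$, $v_1(g_2) = \eta$, $v_2(g_1) = \delta$, $v_2(g_2) = 0$, where $0 < \eta, \delta$ and $\eta + \delta < 1$. The only good that agent $2$ values is $g_1$, so any allocation giving agent $2$ positive value must award it $g_1$; but the assignment $(g_1 \to 2,\, g_2 \to 1)$ has utilitarian total $\delta + \eta < 1$, whereas $(g_1 \to 1,\, g_2 \to 2)$ has total $1$. By the characterization, no envy-freeable allocation gives $g_1$ to agent $2$, so in every envy-freeable allocation agent $2$ receives value $0$. I would confirm this by the short exhaustive check over the only available partitions -- the two singletons, or one agent taking both goods -- verifying that in each case the max-weight assignment leaves agent $2$ empty.

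This settles the Nash case: the optimum $A^*$ assigns $g_1 \to 2$ and $g_2 \to 1$, giving $\NSW(A^*) = \sqrt{\eta\delta} > 0$, while every envy-freeable allocation has a zero utility and hence $\NSW = 0$, so the ratio is $0 \le \epsilon$. The identical instance works for all $\rho \le 0$, where a single zeroed utility forces $\W^{\rho} = 0$; since this holds for the Nash objective in particular, it already witnesses that no non-zero welfare guarantee can hold uniformly across all $\rho$ without accounting for transfers, which is the content of the observation.

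The main thing to get right is the universal quantifier: the bound must hold for \emph{every} envy-freeable allocation, not merely for the utilitarian optimum, so the crux is to arrange the valuations so that no partition whatsoever admits an envy-freeable assignment that rescues the weak agent. The genuinely delicate boundary is the regime $\rho > 0$: there envy-freeability is no real obstruction to high welfare, since the utilitarian-optimal allocation is always envy-freeable and concentrating value is itself rewarded, so as $\rho \to 1$ the ratio is forced toward $1$ and the effect disappears. Pinning down exactly where the phenomenon survives -- and recognizing that the regime $\rho$ near $1$ is precisely what the paper's separate impossibility result for utilitarian welfare must instead handle via the \emph{amount} of transfers -- is where the real subtlety lies.
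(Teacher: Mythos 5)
Your proposal is correct and follows essentially the same route as the paper, which justifies the observation via Example~\ref{ex:bad-NSW}: a two-agent, two-item additive instance in which the utilitarian max-weight-matching characterization of envy-freeability (Theorem~\ref{thm:hsthm1}) forces every envy-freeable allocation to leave the weak agent with (essentially) nothing, while the NSW-optimal allocation gives both agents positive value. The only cosmetic difference is that you zero out the weak agent's value exactly (making the ratio $0$) where the paper uses a small $\epsilon>0$ (making the ratio $O(\sqrt{\epsilon})$); both suffice.
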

Here $A^*$ is the welfare-maximizing allocation. The observation applies even in the case of additive valuations
with Nash social welfare functions. Consequently, the focus on allocations with transfers, and on bounding total transfers, is justified.
For $\rho$-mean welfare functions, we show that positive welfare guarantees are achievable with transfers.
\theoremstyle{definition}
\newtheorem*{thm:EMatch}{Corollary \ref{cor: EnvyMatch}}
\begin{thm:EMatch}
For subadditive valuations, there exists an envy-free allocation with transfers $(A,t)$ such that 
$$
\frac{\W^{\rho}(A,t)}{\W^{\rho}(A^*)} \ge \frac{1}{n} 
$$ 
and with a total transfer $\sum_{i}|t_i|$ of at most $2n^2$. This allocation can be computed in polynomial time.
\end{thm:EMatch}
Here $n$ is the number of agents. Note that the total transfer is independent of the number $m$ of items.
This implies, as $m$ grows, that the transfer payments are negligible in terms of the number of items
(and of total welfare). In particular, our ultimate objective is to obtain both envy-freeness
and high welfare using negligible transfers.

Of course, the welfare guarantee of $\frac{1}{n}$ does not signify high welfare. So we investigate whether
improved bounds can be obtained for the important special cases of $\rho=0$ (Nash social welfare) and $\rho=1$
(utilitarian social welfare).
Strong guarantees on welfare can be obtained for Nash social welfare. Specifically, there exists an envy-free allocation with transfers with a Nash social welfare that is at least an $e^{-1/e}\approx 0.6922$ fraction of the optimal welfare. 
\theoremstyle{definition}
\newtheorem*{thm:N1}{Theorem \ref{thm: NSW}}
\begin{thm:N1}
	For general valuations, there exists an envy-free allocation with transfers $(A,t)$ such that 
	$$
	\frac{\NSW(A,t)}{\NSW(A^*)}\geq e^{-1/e} 
	$$
\end{thm:N1}
Furthermore, for additive valuations, such constant factor welfare guarantees can be obtained with negligible transfer payments.
\theoremstyle{definition}
\newtheorem*{thm:P1}{Theorem \ref{thm: PolyNSW}}
\begin{thm:P1}
	For additive valuations, given an $\alpha$-approximate allocation to maximum Nash social welfare,
	there exists a polynomial time computable envy-free allocation with transfers $(A,t)$ such that
	$$
	\frac{\NSW(A,t)}{\NSW(A^*)}\ \geq\ \frac{1}{2}\alpha\cdot e^{-1/e} 
	$$
	with a total transfer $\sum_{i}|t_i|$ of at most $2n^2$.
\end{thm:P1}

In sharp contrast, for utilitarian social welfare, the factor $\frac{1}{n}$ welfare threshold is tight.
To achieve any welfare guarantee greater than $\frac{1}{n}$ requires non-negligible transfer payments. 
Specifically, we show

\theoremstyle{definition}
\newtheorem*{cor:B1}{Corollary \ref{cor: budget}}
\begin{cor:B1}
For any $\alpha \in \left[\frac{1}{n},1\right]$, there exists an instance with additive valuations such that any envy-free allocation with transfers $(A,t)$ satisfying 
$\frac{\sw(A,t)}{\sw(A^*)}\geq\alpha$ requires a total transfer $\sum_{i\in N}|t_i|$ of at least $\frac{1}{4}\left(\alpha-\frac{1}{n}\right)^2 m$. 
\end{cor:B1}

In fact, there exist instances for which any EF$k$ allocation with $k=o(m)$ has a welfare guarantee of at most $\frac{1}{n} + o(1)$ (Lemma~\ref{imposs}). This implies that EF$k$ allocations cannot provide higher welfare with moderate transfers.

On the positive side, we can design algorithms to produce envy-free allocations with welfare guarantee $\alpha$
whose total transfer payment is comparable to the minimum amount possible, quantified in terms of
the maximum value $\max\limits_i v_i(A^*_i)$ any agent has in the welfare-maximizing allocation.
\theoremstyle{definition}
\newtheorem*{thm:Pos}{Theorem \ref{Pos}}
\begin{thm:Pos}\label{Pos}
	For additive valuations, for any $\alpha\in \left(0,1\right]$, there is a polynomial time computable envy-free allocation with transfers $(A,t)$ such that 
	$$\frac{\sw(A,t)}{\sw(A^*)}\geq \alpha$$ 
	with total transfer $\sum_{i\in N} |t_i|\leq n(\alpha\max\limits_i v_i(A^*_i)+2)$.
\end{thm:Pos}

\theoremstyle{definition}
\newtheorem*{thm:up}{Theorem \ref{thm: upGen}}
\begin{thm:up}
	For general valuations, for any $\alpha\in \left(0,\frac{1}{3}\right]$,  there is an envy-free allocation with transfers $(A,t)$ such 
	that $$\frac{\sw(A,t)}{\sw(A^*)}\geq \alpha \ \ \ $$ with total transfer $\sum_{i\in N} |t_i|\leq 2n^2\left(3\alpha \max_i v_i\left(A^*_i\right)+2\right)$.
\end{thm:up}

\subsection{Overiew of Paper}

In Section~\ref{sec:prelim}, we present our model of the fair division problem with transfers. Section~\ref{sec:rho} contains an 
exposition of the prior results in the literature that will be useful, along with our preliminary results on 
the $\rho$-mean welfare of envy-free allocations with transfers. In Section~\ref{sec:nsw}, we present our results on 
Nash social welfare, and in Section~\ref{sec:usw} we present our results on utilitarian social welfare.

%
%
%

\section{The Model and Preliminaries} \label{sec:prelim}
Let $M=\{1,\cdots, m\}$ be a set of $m$ indivisible items and let $N=\{1,\cdots,n\}$ be a set of agents. 
Each agent $i$ has a \textit{valuation function} $v_i:2^{M}\rightarrow \Real$, where $v_i(\emptyset)=0$. We make the standard 
assumption that each valuation function is \textit{monotone}, satisfying 
$v_i(S)\leq v_i(T)$ whenever $S\subseteq T$. Additionally, following previous work on subsidies (see e.g. \cite{HS19,BDN20}), without loss of generality we uniformly scale the valuation functions by the same factor for each agent
so that the maximum marginal value of any item is at most $1$. 
Besides general monotone valuations, we are also interested in well-known classes of valuation function, in particular, {\em additive} (linear) valuations where $v(S) = \sum_{g\in S} v(g)$ for each $S \subseteq M$,
and {\em subadditive} (complement-free) valuations where $v(S\cup T)\le  v(S) + v(T)$ for all $S,T \subseteq M$. We use $[n]$ to denote the set $\{1,\cdots,n\}$.

\subsection{Fairness and Welfare}
An allocation $A=(A_1,A_2,\cdots,A_n)$ is a partition of the items into $n$ disjoint subsets, where $A_i$ is the 
set of items allocated to the agent $i$. Our aim is to obtain allocations that are both fair and of high welfare.
The concept of fairness we use is envy-freeness.
\begin{definition}
	An allocation $A= (A_1,\cdots, A_n)$ is \textit{envy-free} if for each $i,j\in N$
	$$
	v_i(A_i)\geq v_i(A_j) 
	$$
\end{definition}
In other words, an allocation is envy-free if each agent $i$ prefers its own bundle $A_i$ over 
any the bundle $A_j$ of any other agent $j$. If agent $i$ prefers the bundle of agent $j$ then we say $i$ envies $j$. 
Unfortunately, envy-free allocations do not always exist with indivisible item. This is evident 
even with two agents and one item, since the agent without an item will always envy the other. Moreover, 
even with two players and with identical additive valuations, determining whether an envy-free allocation 
exists is NP-complete. Consequently weaker notions of fairness have been introduced~\cite{Bud11}, most notably 
envy-freeness up to one good. 

\begin{definition}
	An allocation is \textit{envy-free up to one good} (EF1) if for each $i,j\in N$
	$$
	\ v_i(A_i)\geq v_i(A_j\setminus g) \text{ for some } g\in A_j  
	$$
\end{definition}
Rather than approximate fairness, however, our focus is on obtaining envy-freeness by adding one divisible item (money).
Thus we have an {\em allocation with payments}; in addition to the bundle $A_i$ of indivisible good, an agent~$i$ has a payment 
$p_i$.
\begin{definition}
	An allocation with payments $(A,p)$ is \textit{envy-free} if for each $i,j\in N$ 
	$$
	v_i(A_i)+p_i \geq v_i(A_j)+p_j 
	$$ 
\end{definition}
Furthermore, we say that an allocation $A$ is {\em envy-freeable} if there exist payments $p$ such that $(A,p)$ is envy-free.
An important fact is that, in contrast to envy-free allocations, envy-freeable allocations always exist for monotone valuations~\cite{HS19}.
There are two natural types of payment. First, we have \textit{subsidy payments} if $p_i\geq 0$.
Second, we have \textit{transfer payments} if $\sum_{i\in N}p_i=0$, To distinguish these, we denote a subsidy 
payment to agent~$i$ by $s_i$ and a transfer payment by $t_i$. We define the \textit{total transfer} of an allocation as the sum $\sum_i |t_i|$.

We measure the welfare of an allocation $A$ using the general concept of $\rho$-{\em mean welfare}: 
$$\operatorname{W}^\rho(A)=\left(\frac{1}{n}\sum_{i\in N}v_i(A_i)^\rho\right)^{\frac{1}{\rho}}$$
This class of welfare functions, introduced by \citet{ABK19}, encompasses a range of welfare functions including the two 
most important cases ($\rho\rightarrow0$ and $\rho=1$). The former corresponds to 
{\em Nash social welfare}, the geometric mean of the values of the agents, denoted by $\NSW(A)=\left(\prod_{i\in N} v_i(A_i)\right)^{\frac{1}{n}}$.
The latter corresponds to the arithmetic mean of the values of the agents but,
scaling by the number of agents, this is more commonly known as the {\em utilitarian social welfare} or simply {\em social welfare}, i.e. 
the sum of the values of the agents, denoted by $\sw(A)=\sum_{i\in N} v_i(A_i)$.
With transfer payments, our interest lies in utilities rather than simply valuations. In particular, the
$\rho$-mean welfare of an allocation with transfers $(A,t)$ is 
$$\operatorname{W}^\rho(A,t)=\left(\frac{1}{n}\sum_{i\in N} \left(v_i(A_i)+t_i\right)^\rho\right)^{\frac{1}{\rho}}$$

\subsection{Fair Division With Transfer Payments}
In this paper, we study the following question.
\mybox{
\begin{center} Is there an allocation with transfers that simultaneously satisfies the properties of\\ (i) envy-freeness, (ii)  high welfare, and (iii) a negligible total transfer?
\end{center} }
We have seen that envy-freeable allocations always exist. Thus, with transfer payments, we can obtain the property of envy-freeness.

The reader may ask whether transfers are necessary. Specifically, given the guaranteed existence of envy-freeable allocation,
can such allocations provide high welfare? The answer is {\sc no}. Without transfers, high welfare is impossible to ensure.
Even worse, no positive guarantee on welfare can be obtained without transfers.
This is true even for the case of additive valuations. To see this, consider the following simple example for
Nash social welfare.
\begin{example}\label{ex:bad-NSW}
Take two agents and two items $\{a,b\}$. Let the valuation functions be additive with $v_{1,a}=1, v_{1,b}=\frac{1}{2}$ for agent 1 and
$v_{2,a}=\frac{1}{2}, v_{2,b}=\epsilon$ for agent 2. Observe there are only two envy-freeable allocations: either agent 1 gets
both items or agent 1 gets item ${a}$ and agent 2 gets item ${b}$. For both these envy-freeable allocations the 
corresponding Nash social welfare is at most $\sqrt{\epsilon}$. In contrast, the optimal Nash social welfare is $\frac{1}{2}$
when agent 1 gets item ${b}$ and agent 2 gets item ${a}$.	
\end{example}
It follows that to find envy-free solutions with non-zero approximation guarantees for welfare we must have transfer payments. At the outset, if we restrict $\rho$ to be equal to 1, the result of \citet{HS19} implies that the allocation that maximizes utilitarian welfare can be made envy-free with transfer payments. However, we show that this allocation can require arbitrarily large transfers relative to the number of agents. The main point of concern in using transfer payments to achieve envy-freeness is that it may be difficult for the participants to include a substantial quantity of money in the system in order to implement this solution. Consequently, this creates a third requirement, i.e. to bound the total transfers. Thus the holy grail here is to obtain high welfare using only negligible transfers. Formally, we desire transfers whose sum (of absolute values) is independent of the number of items $m$. In particular, we want an allocation with transfers $(A,t)$ such that the welfare of $A$ is at least $\alpha$ times the welfare of the welfare-maximizing allocation $A^*$ (for some large $\alpha \in [0,1]$)
and $\sum_{i\in N} |t_i| = O(f(n))$ for some function~$f$. Specifically, the payments are negligible in the number of items (and thus in the
total welfare) as $m$ grows.

At first glance, this task seems impossible. If envy-freeable solutions cannot themselves ensure non-zero welfare guarantees,
how could negligible transfer payments then induce high welfare? Very surprisingly, this is possible for some important classes 
of valuation functions. However, it is indeed not always possible for other classes. Investigating how and where the boundary of this dichotomy
lies is the purpose of this paper.

%

\section{Transfer Payments and $\rho$-Mean Welfare} \label{sec:rho}

In this section we familiarize the reader with the structure of envy-freeable allocations and transfer payments, and introduce 
our preliminary results. We begin with the general case of $\rho$-mean welfare. For subadditive valuations we have the following welfare guarantee.
\begin{lemma}\label{lem:rho-mean}
	For subadditive valuations, any envy-free allocation with transfers $(A,t)$ satisfies
	$$
	\operatorname{W}^\rho(A,t)\geq \frac{1}{n}\operatorname{W}^\rho(A^*)
	$$
\end{lemma}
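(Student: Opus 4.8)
The plan is to lower-bound $\operatorname{W}^\rho(A,t)$ by comparing it term-by-term against the optimal welfare. The key structural fact I would use is that for an envy-free allocation with transfers, every agent's utility is at least the average utility in a suitable sense, or more directly, that envy-freeness forces each agent's utility to dominate what it could get from any other bundle (plus that bundle's payment). First I would recall the characterization of envy-freeable allocations: $A$ is envy-freeable if and only if $A$ maximizes the sum $\sum_i v_i(A_i)$ over all reassignments of the bundles $A_1,\dots,A_n$ to the agents (the ``no beneficial permutation'' condition from \cite{HS19}). This guarantees that $\sum_i v_i(A_i) = \sw(A) \geq \sum_i v_i(A^*_{\s(i)})$ for the permutation $\s$ matching each agent to its optimal bundle under $A$, which will let me relate $\sw(A)$ to quantities involving $A^*$.

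The main idea is as follows. Since $(A,t)$ is envy-free with $\sum_i t_i = 0$, summing the envy-freeness inequalities $v_i(A_i)+t_i \geq v_i(A_j)+t_j$ appropriately shows that each utility $u_i := v_i(A_i)+t_i$ is at least the average utility $\frac{1}{n}\sw(A)$; indeed, averaging over $j$ gives $u_i \geq \frac{1}{n}\sum_j (v_i(A_j)+t_j) = \frac{1}{n}\sum_j v_i(A_j) \geq \frac{1}{n} v_i(M)$ using $\sum_j t_j = 0$ and monotonicity (the bundles $A_j$ partition $M$, so subadditivity gives $\sum_j v_i(A_j) \geq v_i(M)$). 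Thus every agent's utility satisfies $u_i \geq \frac{1}{n} v_i(M) \geq \frac{1}{n} v_i(A^*_i)$, again by monotonicity since $A^*_i \subseteq M$. This is the crucial inequality: for subadditive valuations, subadditivity is exactly what converts the partition $\{A_j\}$ into a lower bound of $v_i(M)$.

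Given the per-agent bound $v_i(A_i)+t_i \geq \frac{1}{n} v_i(A^*_i)$, the conclusion follows from monotonicity of the $\rho$-mean in each coordinate. Concretely,
$$
\operatorname{W}^\rho(A,t) = \left(\frac{1}{n}\sum_i (v_i(A_i)+t_i)^\rho\right)^{1/\rho} \geq \left(\frac{1}{n}\sum_i \left(\tfrac{1}{n} v_i(A^*_i)\right)^\rho\right)^{1/\rho} = \frac{1}{n}\left(\frac{1}{n}\sum_i v_i(A^*_i)^\rho\right)^{1/\rho} = \frac{1}{n}\operatorname{W}^\rho(A^*),
$$
where I factor the constant $\frac{1}{n}$ out of the power mean (it is positively homogeneous of degree one). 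The cases $\rho \to 0$ (Nash) and $\rho < 0$ need a brief check for sign and monotonicity conventions, but the power mean is coordinatewise nondecreasing in all standard regimes, so the argument goes through uniformly.

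The step I expect to be the main obstacle is justifying the per-agent lower bound $u_i \geq \frac{1}{n} v_i(M)$ cleanly. The averaging-over-$j$ argument requires that the terms $v_i(A_j)$ sum to at least $v_i(M)$, which is precisely where subadditivity enters and where a naive monotone-only argument would fail; I should state this dependence explicitly rather than gloss over it. A secondary subtlety is ensuring the payments $t_j$ cancel in the average (they do, since $\sum_j t_j = 0$), and confirming the utilities $u_i$ are nonnegative so that raising them to the power $\rho$ and taking the $\rho$-mean is well defined — this should follow from the same averaging bound since $v_i(M) \geq 0$.
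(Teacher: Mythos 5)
Your proposal is correct and follows essentially the same route as the paper: average the envy-freeness inequalities over $j$, use $\sum_j t_j = 0$ and subadditivity to get $v_i(A_i)+t_i \geq \frac{1}{n}v_i(M) \geq \frac{1}{n}v_i(A^*_i)$, then factor $\frac{1}{n}$ out of the power mean. The opening detour through the envy-freeability characterization of \cite{HS19} is unnecessary here (the hypothesis already gives you an envy-free allocation with transfers), but it does no harm.
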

\begin{proof}
	By the envy-freeness property $v_i(A_i)+t_i\geq  v_i(A_j)+t_j$. Thus 
	\begin{align*}
		v_i(A_i)+t_i \geq \frac{1}{n}\left( \sum_{j} v_i(A_j)+t_j\right) \geq \frac{1}{n}v_i(M)  \geq \frac{1}{n}v_i(A^*_i)
	\end{align*}
Here the second inequality follows by subadditivity. Hence 
\begin{align*}
\operatorname{W}^\rho (A,t) 
&\ = \ \left(\frac{1}{n}\sum_{i\in N}(v_i(A_i)+t_i)^\rho\right)^{\frac{1}{\rho}} \\ 
&\ \geq \  \left(\frac{1}{n}\sum_{i\in N}\left(\frac{v_i(A^*_i)}{n}\right)^\rho\right)^{\frac{1}{\rho}} \\
& \ = \ \frac{1}{n}\left(\frac{1}{n}\sum_{i\in N}\left(v_i(A^*_i)\right)^\rho\right)^{\frac{1}{\rho}} \\
& \ = \ \frac{1}{n}\operatorname{W}^\rho(A^*) 
\end{align*}
as desired.
\end{proof}
The resultant welfare guarantee of $\alpha=\frac{1}{n}$ is not particularly impressive.
But, at least, it is a strictly positive guarantee, which was unachievable without transfer payments.
The bound is also tight as shown by the following simple example. 
\begin{example}
Take $m=n$ items and $n$ agents. Let the valuation functions be additive with $v_{ii}=1$ and $v_{ij}=0$ for $j\neq i$.
Consider the allocation assigning the grand bundle to agent $1$. This is envy-freeable with transfer payments $t_1=-\frac{n-1}{n}$ and
$t_i=\frac{1}{n}$, for any agent $i\neq 1$. For social welfare ($\rho=1$) the corresponding welfare 
guarantee is $\alpha=\frac{1}{n}$.  
\end{example}

But how expensive is it to obtain this welfare guarantee?
To answer this, we provide a short review concerning the computation of transfer payments.
Recall that an allocation $A$ is envy-freeable if there exist payments $p$ such that $(A,p)$ is envy-free.  
Furthermore, there is a very useful graph characterization of envy-freeability. 
Given an allocation $A$ we build an envy-graph, denoted $G_{A}$.
The envy-graph is directed and complete. It contains a vertex for each agent $i\in N$.
For any pair of agents $i,j\in N$, the weight of arc $(i,j)$ in $G_{A}$
is the envy agent $i$ has for agent $j$ under the allocation $A$, that is, 
$w_{A}(i,j) =  v_i(A_{j}) - v_i(A_i)$. The envy-graph induces the following characterization.
\begin{theorem}[\cite{HS19}] \label{thm:hsthm1}
	The following statements are equivalent.
	\begin{enumerate}
		\item[i)] The allocation $A$ is envy-freeable.
		\item[ii)] The allocation $A$ maximizes (utilitarian) welfare across all reassignments of its bundles to agents: for 
		every permutation $\pi$ of $N$, we have $\sum_{i\in N}v_i(A_i) \geq \sum_{i\in N}v_i(A_{\pi(i)})$.
		\item[iii)] The envy graph $G_{A}$ contains no positive-weight directed cycles.
	\end{enumerate}
\end{theorem}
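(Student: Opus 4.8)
The plan is to establish the three equivalences by proving the cyclic chain of implications (i) $\Rightarrow$ (ii) $\Rightarrow$ (iii) $\Rightarrow$ (i). The first two implications are short consequences of summing inequalities, whereas the last one, which must build explicit payments out of the graph structure, is where the real work lies.

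For (i) $\Rightarrow$ (ii), I would start from payments $p$ witnessing envy-freeness, so that $v_i(A_i) + p_i \geq v_i(A_{\pi(i)}) + p_{\pi(i)}$ for every $i\in N$ and every permutation $\pi$ (taking $j = \pi(i)$ in the definition). Summing over $i \in N$ and using that $\sum_i p_{\pi(i)} = \sum_i p_i$, since $\pi$ merely permutes the payment terms, those terms cancel and we are left with $\sum_i v_i(A_i) \geq \sum_i v_i(A_{\pi(i)})$, which is exactly (ii). For (ii) $\Rightarrow$ (iii), I would argue by contraposition: given a positive-weight directed cycle $i_1 \to i_2 \to \cdots \to i_k \to i_1$ in $G_A$, consider the permutation $\pi$ that cyclically shifts the agents on the cycle, $\pi(i_\ell) = i_{\ell+1}$ with indices taken mod $k$, and fixes every other agent. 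Then $\sum_i v_i(A_{\pi(i)}) - \sum_i v_i(A_i) = \sum_\ell \bigl(v_{i_\ell}(A_{i_{\ell+1}}) - v_{i_\ell}(A_{i_\ell})\bigr) = \sum_\ell w_A(i_\ell, i_{\ell+1})$, which is the total weight of the cycle and hence strictly positive, contradicting the welfare-maximality asserted in (ii). So no positive-weight cycle can exist.

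The crux is (iii) $\Rightarrow$ (i). The envy-freeness condition $v_i(A_i) + p_i \geq v_i(A_j) + p_j$ rearranges to the difference constraint $p_i - p_j \geq w_A(i,j)$, so the task is to solve this system of constraints, which is precisely a longest-path (feasible-potential) problem on $G_A$. The construction I would use is to set $p_i = \ell(i)$, where $\ell(i)$ denotes the maximum total weight over all directed paths in $G_A$ that start at vertex $i$. For any arc $(i,j)$, prepending it to a weight-maximizing path out of $j$ produces a path out of $i$ of weight $w_A(i,j) + \ell(j)$, so by maximality $\ell(i) \geq w_A(i,j) + \ell(j)$, i.e. $p_i - p_j \geq w_A(i,j)$ as required; hence $(A,p)$ is envy-free. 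The one point that needs care, and the main obstacle, is showing that $\ell(i)$ is well-defined and finite, which is exactly where hypothesis (iii) enters. Any directed walk out of $i$ decomposes into a simple path together with a family of closed cycles, and since every cycle has weight at most $0$ by (iii), the supremum of walk weights is attained by a simple path; as there are only finitely many simple paths, $\ell(i)$ is a finite maximum. I would remark that were a positive-weight cycle to exist, these walk weights would be unbounded and no feasible potential could exist, which is the converse intuition tying the three conditions together.
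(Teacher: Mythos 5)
Your proof is correct and follows essentially the same route as the source: the paper cites this result from \citet{HS19} without reproducing the proof, but the longest-path payment construction you use for (iii) $\Rightarrow$ (i) is exactly the one the paper itself invokes immediately after the theorem statement (setting $s_i = l(i)$), and the summation and cyclic-shift arguments for the other two implications are the standard ones. The only cosmetic point is that prepending an arc $(i,j)$ to a maximum-weight path out of $j$ may yield a walk rather than a simple path, but your observation that walk weights and simple-path weights have the same supremum under (iii) already covers this.
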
  
In addition, we can use the envy-graph to compute the transfer payments. First, it is known~\cite{HS19} how to
find, for any envy-freeable allocation $A$, the minimum {\em subsidy} payments $s$ such that $(A,s)$ is
envy-free. Specifically, let $l(i)$ be weight of a maximum weight path from node $i$ to any other node in $G_A$. Setting 
$s_i=l(i)$, for each agent $i$, gives an envy-free allocation with minimum subsidy payments. 
We do not wish to subsidize the mechanism, so we convert these subsidies into transfer payments.
To do this, let $\bar{s}=\frac{1}{n}\sum_{i\in N} s_i$ be the average subsidy. Then setting
$t_i=s_i-\bar{s}$ for each agent gives a valid set of transfer payments, which we dub the
\textit{natural} transfer payments. We remark that the natural transfer payments do not always minimize the total transfer, 
but they will be sufficient for our purposes.

We are now ready to compute transfer payments for subadditive valuations in the $\rho$-mean welfare setting. 
We begin with a theorem of \citet{BDN20}.
\begin{theorem}[\cite{BDN20}]\label{thm: EnvyMatch}
For monotone valuations there is a polynomial time algorithm to find an envy-free allocation 
with subsidies $(A,s)$ such that $s_i\leq 2(n-1)$ for all $i$.
\end{theorem}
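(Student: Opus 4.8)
The plan is to reason entirely through the envy-graph machinery recalled above. By \Cref{thm:hsthm1} it suffices to produce an envy-freeable allocation $A$, since the minimum subsidies supporting it are exactly the longest-path weights $s_i = l(i)$ in the envy graph $G_{A}$; the theorem then reduces to the combinatorial statement that we can choose $A$ so that every directed path in $G_{A}$ has weight at most $2(n-1)$. Because $G_{A}$ has no positive-weight directed cycle, a maximum-weight path $i_0 \to i_1 \to \cdots \to i_k$ is simple and hence uses at most $n-1$ arcs, so it is enough to control the envy contributed by each arc.

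My first step would be to fix the right allocation rather than an arbitrary welfare-maximizing one. A generic welfare-maximizing allocation will not do: with two agents sharing $m$ identical unit-valued items, assigning all items to one agent is welfare-maximizing (hence envy-freeable) yet the arc $2 \to 1$ of $G_{A}$ carries envy $v_2(A_1) - v_2(A_2) = m$. I would therefore run a local-search procedure that, among allocations, repeatedly applies welfare-non-decreasing single-item reassignments to drive down a potential such as the maximum subsidy $\max_i l(i)$, terminating at a locally optimal allocation $A$; each such move is polynomial-time checkable and the process terminates because the relevant quantities are bounded integers after scaling, so the whole construction runs in polynomial time.

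The crux is then the exchange lemma bounding the envy on a single arc of the longest path by $2$. The mechanism is the normalization that every item has marginal value at most $1$: moving a single item $g$ from one agent's bundle to another changes each of the two affected agents' valuations by at most $1$ in absolute value (adding $g$ raises a value by at most $1$, and removing $g$ lowers it by at most $1$), which is exactly where the factor $2$ enters. I would argue that if some arc $i_\ell \to i_{\ell+1}$ of the longest path carried envy exceeding $2$, then transferring a suitable item along that arc would strictly reduce the potential while keeping the allocation envy-freeable, contradicting local optimality; summing the resulting per-arc bound of $2$ over the at most $n-1$ arcs yields $l(i) \le 2(n-1)$. The main obstacle is precisely proving this exchange lemma for general monotone valuations: the naive estimate $w_{A}(i,j) = v_i(A_j) - v_i(A_i) \le v_i(A_j)$ is worthless because $v_i(A_j)$ scales with the size of $A_j$, so the entire difficulty lies in converting local optimality into a size-independent, constant-per-arc bound, and in verifying that the identified single-item move preserves envy-freeability (the absence of positive-weight cycles) while strictly improving the potential.
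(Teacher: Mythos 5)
There is a genuine gap, and it sits exactly where you locate it yourself: the ``exchange lemma'' is asserted, not proved, and it is the entire content of the theorem. Your reduction to bounding the maximum-weight path in $G_A$ by $2(n-1)$ is correct, and so is the observation that such a path is simple and hence has at most $n-1$ arcs. But the claim that a locally optimal allocation (under welfare-non-decreasing single-item moves) has every arc of the longest path carrying envy at most $2$ does not follow from anything you write. A single-item move changes an arc weight $w_A(i,j)=v_i(A_j)-v_i(A_i)$ by at most $2$, so if that weight is large one move cannot remove the envy on that arc; to derive a contradiction with local optimality you would need that some single move strictly decreases the global potential $\max_i l(i)$ while remaining welfare-non-decreasing (so that envy-freeability, i.e.\ condition (ii) of \Cref{thm:hsthm1}, is preserved), and no such move need exist: the transfer can lower utilitarian welfare, or shorten one arc while lengthening a different path through the recipient. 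Your termination argument is also unsound: the valuations are reals with marginals in $[0,1]$, not bounded integers, so the proposed local search has no a priori polynomial (or even finite) running-time bound.

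The paper does not reprove this statement --- it is imported from \cite{BDN20} --- but both the source and the paper's own generalization (\Cref{lem: EnvyMatch}) proceed constructively rather than by local search. In \Cref{lem: EnvyMatch} one starts from an allocation $B$ with $b$-bounded envy, reassigns its bundles by a single maximum-weight matching to obtain an envy-freeable $A$, and bounds the weight of a maximum-weight path by telescoping: each term $v_i(A_{i+1})-v_i(A_i)$ is split as $\left(v_i(B_{\pi(i+1)})-v_i(B_i)\right)+\left(v_i(B_i)-v_i(A_i)\right)$, the first summands are each at most $b$ by bounded envy, and the second summands sum to at most $b(n-1)$ by the optimality of the matching. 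The algorithm of \cite{BDN20} behind \Cref{thm: EnvyMatch} obtains the needed bounded-envy structure by allocating items in rounds of maximum-weight matchings (at most one item per agent per round), which is where the unit bound on marginal values enters --- as a property of the construction, not as a consequence of a local optimality condition. To salvage your route you would have to actually prove the exchange lemma, which is the hard part; the matching-based construction sidesteps it entirely.
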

Observe that when these subsidies are converted to the corresponding natural transfers, any bound on the maximum 
subsidy for each agent also applies to the maximum transfer for each agent. Combining this observation with the previous 
result gives us the following corollary.
\begin{corollary}\label{cor: EnvyMatch}
For subadditive valuations, there exists an envy-free allocation with transfers $(A,t)$ such that 
$$
\frac{\W^{\rho}(A,t)}{\W^{\rho}(A^*)} \ge \frac{1}{n} 
$$ 
and with a total transfer $\sum_{i}|t_i|$ of at most $2n^2$. This allocation can be computed in polynomial time.
\end{corollary}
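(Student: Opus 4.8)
The plan is to assemble the corollary directly from the two ingredients stated immediately above it: Theorem~\ref{thm: EnvyMatch} of \citet{BDN20}, which supplies an envy-free allocation with small subsidies, and Lemma~\ref{lem:rho-mean}, which certifies a $\frac{1}{n}$ welfare guarantee for \emph{every} envy-free allocation with transfers. The key realization is that these two requirements do not compete. Since the welfare bound of Lemma~\ref{lem:rho-mean} holds unconditionally for any envy-free allocation with transfers, I am free to choose whichever envy-freeable allocation has the cheapest (most controllable) payments, and the welfare bound then comes along for free. So the entire task reduces to producing \emph{one} envy-free allocation with transfers whose total transfer is at most $2n^2$.

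First I would invoke Theorem~\ref{thm: EnvyMatch}. As subadditive valuations are in particular monotone, its polynomial-time algorithm yields an envy-free allocation with subsidies $(A,s)$ satisfying $0\le s_i\le 2(n-1)$ for every agent $i$. Next I would convert these subsidies into the natural transfer payments described above, setting $t_i=s_i-\bar{s}$ where $\bar{s}=\frac{1}{n}\sum_{i} s_i$. A one-line check confirms that envy-freeness is preserved: subtracting the common constant $\bar{s}$ from both sides of $v_i(A_i)+s_i\ge v_i(A_j)+s_j$ yields $v_i(A_i)+t_i\ge v_i(A_j)+t_j$, so $(A,t)$ is a genuine envy-free allocation with transfers, and the whole construction runs in polynomial time.

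It then remains to bound the total transfer and to apply the welfare lemma. Because both $s_i$ and the average $\bar{s}$ lie in the interval $[0,2(n-1)]$, each $|t_i|=|s_i-\bar{s}|\le 2(n-1)$, and therefore $\sum_i |t_i|\le 2n(n-1)\le 2n^2$, giving the claimed transfer budget. Finally, since $(A,t)$ is an envy-free allocation with transfers under subadditive valuations, Lemma~\ref{lem:rho-mean} immediately delivers $\W^\rho(A,t)\ge \frac{1}{n}\W^\rho(A^*)$, which closes the argument.

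The main obstacle here is conceptual rather than technical: one must notice that the welfare guarantee and the transfer budget can be secured independently, so no simultaneous optimization is needed. There is no genuinely hard step. The only points requiring care are verifying that the subsidy-to-transfer conversion preserves envy-freeness and tracking constants so that the per-agent bound $2(n-1)$ aggregates to at most $2n^2$ rather than a larger multiple of $n^2$.
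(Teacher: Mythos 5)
Your proposal is correct and follows exactly the route the paper intends: invoke Theorem~\ref{thm: EnvyMatch} for the bounded-subsidy allocation, convert to natural transfers (noting each $|t_i|\le 2(n-1)$ since $s_i$ and $\bar{s}$ both lie in $[0,2(n-1)]$), and apply Lemma~\ref{lem:rho-mean} for the $\frac{1}{n}$ welfare guarantee. The paper states this only as a brief remark before the corollary, and your write-up supplies the same argument with the details made explicit.
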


Thus, we can quickly obtain an envy-free allocation with transfers whose total transfer is independent of the number of items $m$.
So we have negligible transfers. But, as stated, we only have a low welfare guarantee for this general $\rho$-mean welfare class.
In the next section, we will show that high welfare and negligible transfers are achievable for the special case of $\rho=0$, that is, Nash social welfare.

Before doing so, we conclude this section by presenting a generalization of Theorem~\ref{thm: EnvyMatch} that will later be useful. 
We say that an allocation $B$ has {\em b-bounded envy} if $v_i(B_j)-v_i(B_i)\leq b$ for every pair of agents $i,j\in N$.

\begin{lemma}\label{lem: EnvyMatch}
Given an allocation $B$ with $b$-bounded envy, there is a polynomial time algorithm to find an envy-free allocation with transfers $(A,t)$ such that $\sum_{i\in N}|t_i|\leq 2bn^2$. 
\end{lemma}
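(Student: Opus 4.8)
The plan is to keep the bundles of $B$ intact and only reassign them among the agents. Concretely, I would compute the permutation $\pi$ of $N$ maximizing $\sum_{i} v_i(B_{\pi(i)})$ — a maximum-weight perfect matching between the agents and the $n$ bundles $B_1,\dots,B_n$, which is polynomial time since it needs only the $n^2$ numbers $v_i(B_j)$ — and set $A_i = B_{\pi(i)}$. By the equivalence (i)$\Leftrightarrow$(ii) of \cref{thm:hsthm1}, this welfare-maximal reassignment $A$ is envy-freeable, so I can take the minimum subsidies $s_i = l(i)$ (the maximum weight of a path leaving $i$ in the envy graph $G_A$, computable in polynomial time as $G_A$ has no positive cycle) and convert them to the natural transfers $t_i = s_i - \bar s$, which keep $(A,t)$ envy-free. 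Since $0 \le s_i \le \max_k s_k =: S$, each $|t_i| \le S$ and hence $\sum_i |t_i| \le nS$; thus the whole lemma reduces to proving the per-agent subsidy bound $S \le 2bn$.

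To bound $S$, fix an agent and a maximum-weight path $P = (i_0, i_1, \dots, i_p)$ realizing $l(i_0) = S$. Because $G_A$ has no positive cycle (\cref{thm:hsthm1}(iii)), $P$ may be taken simple, so $p \le n-1$ and the agents $i_0,\dots,i_{p-1}$ are distinct. I would bound each arc using the $b$-bounded envy of $B$: writing $\mathrm{loss}(i) := v_i(B_i) - v_i(A_i)$, the inequality $v_{i_t}(A_{i_{t+1}}) = v_{i_t}(B_{\pi(i_{t+1})}) \le v_{i_t}(B_{i_t}) + b$ gives $w_A(i_t,i_{t+1}) = v_{i_t}(A_{i_{t+1}}) - v_{i_t}(A_{i_t}) \le b + \mathrm{loss}(i_t)$. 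Summing along $P$ yields $S = \sum_{t=0}^{p-1} w_A(i_t,i_{t+1}) \le (n-1) b + \sum_{t=0}^{p-1} \mathrm{loss}(i_t)$.

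The crux — and the step I expect to be the main obstacle — is that an individual arc, and an individual $\mathrm{loss}(i_t)$, can be as large as $\Theta(nb)$, so a naive per-arc estimate only gives $S = O(n^2 b)$ and a final bound of $O(n^3 b)$. The resolution is a global accounting argument that plays welfare-maximality against bounded envy. Since the agents on $P$ are distinct, $\sum_{t} \mathrm{loss}(i_t)$ is at most the total positive loss $\sum_{i : \mathrm{loss}(i) > 0} \mathrm{loss}(i)$. Welfare-maximality of $A$ gives $\sum_{i} \mathrm{loss}(i) = \sw(B) - \sw(A) \le 0$ (the identity reassignment recovers $B$), so the total positive loss is at most the total \emph{gain} $\sum_{i : \mathrm{loss}(i) < 0} \bigl(v_i(A_i) - v_i(B_i)\bigr)$; and each agent's gain equals $v_i(B_{\pi(i)}) - v_i(B_i) \le b$ by $b$-bounded envy, so the total gain is at most $nb$. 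Hence $\sum_{t}\mathrm{loss}(i_t) \le nb$, giving $S \le (n-1)b + nb < 2bn$ and therefore $\sum_i |t_i| \le nS < 2bn^2$, as required.
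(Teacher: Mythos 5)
Your proposal is correct and follows essentially the same route as the paper: reassign the bundles of $B$ by a maximum-weight matching, bound the maximum-weight path in the envy graph by splitting each arc weight as $\bigl(v_i(B_{\pi(i+1)})-v_i(B_i)\bigr)+\bigl(v_i(B_i)-v_i(A_i)\bigr)$, and control the accumulated ``loss'' terms via welfare-maximality of the reassignment together with the $b$-bounded envy of $B$. Your loss/gain accounting is exactly the paper's inequality chain in display~(\ref{eq:rho-2}), and the conversion of the per-agent subsidy bound into the $2bn^2$ total-transfer bound matches as well.
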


\begin{proof}
Let $B=\{B_1,B_2,\dots, B_n\}$ be an allocation with $b$-bounded envy. Let $A=(B_{\pi(1)},\cdots,B_{\pi(n)} )$ be the envy-freeable allocation obtained by computing a maximum-weight 
matching between the bundles in $B$ and the agents. Applying an approach of~\cite{BDN20}, let $P$ be 
a path of maximum weight in the envy-graph $G_{A}$.
Without loss of generality, $P=(1,\cdots, r)$.
By definition of the envy-graph, we then have
	\begin{align}\label{eq:rho-1}
		w_{A}(P)&=\sum_{i=1}^{r-1} v_i(A_{i+1})-v_i(A_{i}) \nonumber\\ 
		&=\sum_{i=1}^{r-1} v_i(A_{i+1})-v_i(B_i)+ v_i(B_i)- v_i(A_{i}) \nonumber\\ 
		&=\sum_{i=1}^{r-1} v_i(B_{\pi(i+1)})-v_i(B_i)+\sum_{i=1}^{r-1}  v_i(B_i)- v_i(A_{i}) \nonumber\\ 
		&\leq b(n-1)+\sum_{i=1}^{r-1}v_i(B_i)- v_i(A_{i})
	\end{align}
Here the inequality holds as $v_i(B_{\pi(i+1)})-v_i(B_i)\leq b$ for each agent $i$, and $r < n$. We have
\begin{align}\label{eq:rho-2}
\sum_{i=1}^{r-1}v_i(B_i)- v_i(A_{i}) &\leq
		\sum\limits_{i:v_i(B_i)\geq v_i(A_i) }v_i(B_i)- v_i(A_{i}) \nonumber\\ 
		&\leq -\sum\limits_{i:v_i(B_i)< v_i(A_i) }v_i(B_i)- v_i(A_{i}) \nonumber\\ 
		&=  \sum\limits_{i:v_i(B_i)< v_i(A_i) } v_i(A_{i})-v_i(B_i) \nonumber\\ 
		&=  \sum\limits_{i:v_i(B_i)< v_i(A_i) } v_i(B_{\pi(i)})-v_i(B_i) \nonumber\\ 
		&\leq b(n-1).
	\end{align}	
Above the second inequality holds as the social welfare of $A$ is the maximum over all allocations of the bundles in $B$; in particular, 
$\sum_i v_i(A_i) \ge \sum_i v_i(B_{i})$. The last inequality again follows as $B$ has $b$-bounded envy.

Together (\ref{eq:rho-1}) and (\ref{eq:rho-2}) give $w_{A}(P)\leq 2b(n-1)$.
This implies that $A$ can be made envy-free with a subsidy $s_i\leq 2b(n-1)$ to each agent $i$. 
Hence, setting $t_i=s_i-\bar{s}$, we have that $(A,t)$ is 
	envy-free with a total transfer payment of at most $\sum_{i\in N}|t_i|\leq 2bn^2$.
\end{proof}



\section{Transfer Payments and Nash Social Welfare}\label{sec:nsw}
In the following two sections, we present our main results concerning Nash social welfare and utilitarian social welfare.
Here we show that, with transfers, excellent welfare guarantees can be obtained for Nash social welfare.
Conversely, in Section~\ref{sec:usw}, we will see that only much weaker guarantees can be obtained for utilitarian social welfare.

\subsection{NSW with General Valuation Functions}
Now, recall from Example~\ref{ex:bad-NSW} that no positive welfare guarantee can be obtained in the case of Nash social welfare
even for the basic case of additive valuations. Our first result for Nash social welfare is therefore somewhat surprising.
With transfer payments, constant factor welfare guarantees can be obtained for NSW for general valuations.
That is, envy-freeness and high welfare are simultaneously achievable. 
\begin{theorem}\label{thm: NSW}
	For general valuations, there exists an envy-free allocation with transfers $(A,t)$ such that 
	$$
	\frac{\NSW(A,t)}{\NSW(A^*)}\geq e^{-1/e} 
	$$
\end{theorem}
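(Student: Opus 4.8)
The plan is to reduce the statement to a pure \emph{transfer-redistribution} problem and then solve a one-variable optimization whose extremal value is exactly $e^{-1/e}$. First I would fix the welfare-maximizing allocation $A^*$ and take its bundles $\{A^*_1,\dots,A^*_n\}$, reassigning them to the agents by a maximum-weight matching to obtain an envy-freeable allocation $A$ (this is envy-freeable by \Cref{thm:hsthm1}(ii), since it maximizes utilitarian welfare over all reassignments of these bundles). Let $(A,s)$ be the minimum-subsidy envy-free solution, with utilities $u_i^s=v_i(A_i)+s_i$. The key observation is that $(A,s)$ pointwise dominates $A^*$: since $s\ge 0$ and $(A,s)$ is envy-free, $u_i^s=v_i(A_i)+s_i\ge v_i(A_k)+s_k\ge v_i(A_k)$ for every agent $k$, and choosing $k$ to be the agent who received bundle $A^*_i$ gives $u_i^s\ge v_i(A^*_i)$. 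Hence $\NSW(A,s)\ge \NSW(A^*)$, and the only loss toward the theorem comes from converting the subsidies (which inject external utility $S=\sum_i s_i$) into genuine transfers that must sum to zero.

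Thus the real task is: starting from the dominating utility vector $u^s$ with $\sum_i u_i^s=\sum_i v_i(A_i)+S$, remove exactly $S$ units of utility while remaining envy-free, so as to lose as little as possible in the product $\prod_i u_i$. Formally I would choose the final utilities $u$ from the envy-free transfer polytope $\{\,u:\ u_i-u_j\ge v_i(A_j)-v_j(A_j)\ \ \forall i,j,\ \ \sum_i u_i=\sum_i v_i(A_i)\,\}$ to maximize $\prod_i u_i$. The natural (uniform) transfer lies in this polytope but subtracts the same amount $\bar s=S/n$ from every agent, which can drive a small-valued agent's utility to zero and destroy the product; so instead I would remove the surplus $S$ \emph{preferentially from the high-utility agents}, down to a common threshold, subject to the difference constraints. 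Because reducing a large $u_i$ costs little in $\log\prod_i u_i$ while the difference constraints cap how far the low agents can be raised, the binding trade-off is between how many agents are pinned down by envy and how far below their optimal value they are pinned.

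The final step is to quantify this trade-off. Taking logarithms, the per-agent loss $\tfrac1n\big(\sum_i\log v_i^*-\sum_i\log u_i\big)$ is controlled by a single-variable expression of the form $-\lambda\log\lambda$, where $\lambda$ is the effective fraction of utility that the envy constraints force a reduced agent to retain; since $\max_{\lambda\in(0,1]}(-\lambda\log\lambda)=1/e$ (equivalently $\min_{x>0}x^{x}=e^{-1/e}$), the overall multiplicative loss is at least $e^{-1/e}$, giving $\NSW(A,t)\ge e^{-1/e}\,\NSW(A,s)\ge e^{-1/e}\,\NSW(A^*)$. I expect the main obstacle to be exactly this last optimization: one must show that the envy-free difference constraints never force a worse configuration than the extremal one, i.e.\ that after optimally routing the removed utility the per-coordinate ratios $u_i/v_i^*$ can always be arranged so their geometric mean is at least $\min_x x^x$. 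Handling the constraints (which couple the agents through the envy graph) rather than treating each coordinate independently is the delicate part; identifying the right extremal instance and reducing the coupled problem to the scalar bound $x^x\ge e^{-1/e}$ is where the real work lies.
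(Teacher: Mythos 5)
Your setup matches the paper's: you reassign the bundles of $A^*$ by a maximum-weight matching to obtain an envy-freeable $A$, and your ``pointwise domination'' observation is exactly the inequality $v_i(A_i)+t_i\ge v_i(A_{\pi(i)})+t_{\pi(i)}=v_i(A^*_i)+t_{\pi(i)}$ that drives the paper's proof (specialized to subsidies). You have also correctly identified the extremal scalar fact $\min_{x>0}x^x=e^{-1/e}$ as the source of the constant. But there is a genuine gap precisely where you flag ``the real work'': you never establish the quantitative bound on the multiplicative loss incurred when the injected subsidy $S$ is withdrawn. Asserting that the per-agent log-loss ``is controlled by'' $-\lambda\log\lambda$ is the conclusion, not an argument; without it the theorem is unproven.

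The missing ingredients are concrete. First, you need a \emph{floor} on every agent's final utility: letting $t_{\max}=\max_i t_i$, envy-freeness toward the agent receiving $t_{\max}$ together with $v_i(\cdot)\ge 0$ gives $v_i(A_i)+t_i\ge t_{\max}$ for all $i$. Without this floor, an agent with tiny $v_i(A^*_i)$ whose matched bundle carries a negative transfer could have ratio $(v_i(A_i)+t_i)/v_i(A^*_i)$ arbitrarily close to $0$, and ``preferential removal from high-utility agents'' does not by itself rule this out, since the removed utility must be routed consistently with the difference constraints. Second, you must couple the floor to the total withdrawn amount: with $T$ the total positive transfer, $t_{\max}\ge T/|N^+|$, while the agents hurt by negative matched transfers lose at most $T$ in total; the arithmetic--geometric mean inequality then shows the worst case is $|N^-|$ agents each retaining $T/|N^+|$ and each losing $T/|N^-|$, yielding the ratio $\bigl(|N^-|/n\bigr)^{|N^-|/n}\ge e^{-1/e}$. (Incidentally, once this is done the bound holds for \emph{any} valid transfer vector, including the uniform/natural one, so your concern that the uniform transfer ``destroys the product'' is unfounded.) Your framing as an optimization over the envy-free transfer polytope is workable, but you would still have to prove this extremal bound over that polytope; as written, the proposal stops at the point where the theorem's content begins.
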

\begin{proof} 
Let $A^*$ be an allocation that maximizes Nash social welfare. Now, let
$A$ be an envy-freeable allocation induced by reallocating the bundles in $A^*$ to maximize utilitarian social welfare.
Recall this can be found by taking a maximum weight matching between the agents and
the bundles of $A^*$; let $\pi(i)$ be the agent who receives bundle $A^*_i$ in the allocation $A$.
By Theorem~\ref{thm:hsthm1}, this allocation is envy-freeable. 
So let $t$ be any valid set of  transfer payments such that $(A,t)$ is envy-free.

By definition we have that $v_i(A^*_i)=v_i(A_{\pi(i)})$, for all $i\in N$.
Then, by envy-freeness,  we have  $v_i(A_i)+t_i\geq v_i(A_{\pi(i)})+t_{\pi(i)}=v_i(A^*_i)+t_{\pi(i)}$. Denote by $t_{\max}$ the 
maximum positive transfer payment, i.e. $t_{\max} = \max_{i} t_i$, and let $m$ be an agent whose transfer $t_m$ is 
equal to $t_{\max}$. By envy-freeness, no agent envies agent $m$, so $v_i(A_i)+t_i \geq t_{\max}$ for all $i$. Putting this all together, we have
\begin{align*}
	\frac{\prod_{i=1}^{n} v_i(A_i)+t_i }{\prod_{i=1}^{n}v_i(A^*_i)} 	 	 
	&\ \geq\ \prod_{i=1}^n \frac{\max\left[ v_i(A^*_i)+t_{\pi(i)} , t_{\max}\right] }{v_i(A^*_i)}
\end{align*}

Now define $N^+=\{i \ | \ t_{\pi(i)}\geq 0 \} $ and $N^- = N\setminus N^+$. 
\begin{align*}
	\prod_{i=1}^n \frac{\max\left[ v_i(A^*_i)+t_{\pi(i)} , t_{\max}\right] }{v_i(A^*_i)} 
	&\geq \prod_{i\in N^+} \frac{v_i(A^*_i)+t_{\pi(i)} }{v_i(A^*_i)} \cdot \prod_{i\in N^-} \frac{\max\left[ v_i(A^*_i)+t_{\pi(i)} , t_{\max}\right] }{v_i(A^*_i)}\\
	&\ge \prod_{i\in N^-} \frac{\max\left[ v_i(A^*_i)+t_{\pi(i)} , t_{\max}\right] }{v_i(A^*_i)}
\end{align*}
Next let $N^-_1$ be the indices corresponding to negative transfers that also satisfy $t_{\max}\le v_i(A^*_i)+t_{\pi(i)}$, and 
let $N^-_2$ be the indices corresponding to negative transfers that also satisfy $t_{\max} > v_i(A^*_i)+t_{\pi(i)}$. 
Furthermore, set $v_i(A_i^*)+t_{\pi(i)} = t_{\max}+\alpha_i$. Observe that, for $i\in N^-_1$, we have $\alpha_i\ge 0$, but 
for $i\in N^-_2$, we have $\alpha_i< 0$. Applying this gives
\begin{align*}
	\prod_{i\in N^-} \frac{\max\left[ v_i(A^*_i)+t_{\pi(i)} , t_{\max}\right] }{v_i(A^*_i)}
	&\ \ge\  \left(\prod_{i\in N^-_1} \frac{t_{\max}+\alpha_i }{t_{\max}+\alpha_i-t_{\pi(i)}}\right)\cdot \left(\prod_{i\in N^-_2} \frac{t_{\max}}{t_{\max}+\alpha_i-t_{\pi(i)}}\right)\\
	&\ \ge\  \left(\prod_{i\in N^-_1} \frac{t_{\max}}{t_{\max}-t_{\pi(i)}}\right)\cdot \left(\prod_{i\in N^-_2} \frac{t_{\max}}{t_{\max}-|\alpha_i|-t_{\pi(i)}}\right)\\
	&\ \ge\  \left(\prod_{i\in N^-_1} \frac{t_{\max}}{t_{\max}-t_{\pi(i)}}\right)\cdot \left(\prod_{i\in N^-_2} \frac{t_{\max}}{t_{\max}-t_{\pi(i)}}\right)\\
	&\ =\  \left(\prod_{i\in N^-} \frac{t_{\max}}{t_{\max}-t_{\pi(i)}}\right)	
\end{align*}
Now for, $i\in N^-$, let $k_i= |t_{\pi(i)}|$.
Since $\sum_{i\in N} t_i=0$ we have $\sum_{i\in N^+}t_i=\sum_{i\in N^-}t_i := T$.   Thus
\begin{align*}
	\left(\frac{\prod_{i=1}^{n} v_i(A_i)+t_i }{\prod_{i=1}^{n}v_i(A^*_i)} \right)^{1/n}
	\ &\ge\  	\left(\prod_{i\in N^-} \frac{t_{\max}}{t_{\max}-t_{\pi(i)}}\right)^{1/n} \\
	 \ &=\  \left(\prod_{i\in N^-} \frac{t_{\max}}{t_{\max}+k_i}\right)^{1/n}  
\end{align*}
Observe, by the {\em arithmetic-geometric mean inequality}, that  $\prod_{i\in N^-}(t_{\max}+k_i) $ is maximized when $k_i=k_j=T/|N^-|$.
In addition, $t_{\max}\geq T/|N^+| $. So
\begin{align*}
	\left(\prod_{i\in N^-} \frac{t_{\max}}{t_{\max}+k_i}\right)^{1/n} 
	\ &\geq\ \left(\frac{\frac{T}{|N^+|}}{\frac{T}{|N^+|}+\frac{T}{|N^-|}}\right)^{|N^-|/n}  \\
	\ &=\  \left(\frac{n-|N^+|}{n}\right)^{\frac{n-|N^+|}{n}} \\
	\ &\geq\ \min\limits_x \left(\frac{1}{x}\right)^\frac{1}{x} \\ 
	\ &\geq\ e^{-1/e} &  \qedhere
\end{align*}
\end{proof}

This theorem is rather noteworthy; for general valuation functions, with transfers, it allows us to simultaneously obtain both high Nash social welfare and perfect envy-freeness. But what of our third objective, that of negligible transfer payments? The approach applied in the proof of Theorem~\ref{thm: NSW} cannot guarantee negligible transfers. Specifically, simply reallocating the bundles of the allocation $A^*$ that maximizes Nash social welfare can require large transfers. In particular, the following example shows this method may require transfers as large as $\Omega(\sqrt{m})$.
\begin{example}\label{ex:reallocate-bundles}
Take an instance with two agents and $m$ items. Assume the first agent has a valuation function given by 
$v_1(S)=|S|$, for each $S\subseteq M$; assume the second agent has a valuation function given by  $v_2(S)=\sqrt{|S|}$, 
for each $S\subseteq M$. The reader may verify that the Nash welfare maximizing allocation $A^*$ is to 
give the first agent $\frac{2m}{3}$ items and the second agent $\frac{m}{3}$ items. 
This allocation is also the allocation that maximizes utilitarian social welfare by reassigning the bundles of $A^*$. 
Thus $A=A^*$. However, to make the allocation envy-free requires a minimum transfer payment of $\Omega(\sqrt{m})$, 
from the first agent to the second agent.
\end{example}  
Of course, this example does not rule out the possibility that, for general valuation functions,
an envy-free allocation with transfers that has high welfare and negligible 
payments exists. In particular, simply allocating each agent half the items requires no transfer payments at all, and gives high Nash social welfare. So simultaneously obtaining high Nash social welfare and envy-freeness via negligible transfers for general valuation functions remains an open question. 
Fortunately, we can show that these three properties are simultaneously achievable for important special classes of valuation function.

\subsection{NSW Guarantees with Negligible Transfers}
Here we prove that for (i) additive valuations, and (ii) matroid rank valuations, it is always possible to
obtain envy-free allocations with high Nash social welfare and negligible transfers.
Furthermore, for additive valuations we can do this using polynomial time algorithms.
\begin{theorem}\label{thm: PolyNSW}
	For additive valuations, given an $\alpha$-approximate allocation to maximum Nash social welfare,
	there exists a polynomial time computable envy-free allocation with transfers $(A,t)$ such that
	$$
	\frac{\NSW(A,t)}{\NSW(A^*)}\ \geq\ \frac{1}{2}\alpha\cdot e^{-1/e} 
	$$
	with a total transfer $\sum_{i}|t_i|$ of at most $2n^2$.
\end{theorem}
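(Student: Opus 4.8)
The plan is to decouple the two requirements --- a constant-factor Nash welfare guarantee and a total transfer bounded by $2n^2$ --- by first producing an intermediate allocation $B$ that already has \emph{bounded envy}, and only then converting it into an envy-free allocation with transfers. Concretely, I would isolate the following key step: given the $\alpha$-approximate allocation $A'$, compute in polynomial time an allocation $B$ with $1$-bounded envy (that is, $v_i(B_j)-v_i(B_i)\le 1$ for all $i,j$) whose Nash welfare satisfies $\NSW(B)\ge \tfrac12\,\NSW(A')$. Since the valuations are additive and every item has marginal value at most $1$, an \EFone allocation automatically has $1$-bounded envy: if $v_i(B_i)\ge v_i(B_j\setminus g)$ for some $g\in B_j$ then $v_i(B_j)-v_i(B_i)\le v_i(g)\le 1$. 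So it suffices to reach an \EFone allocation that retains at least half the Nash welfare of $A'$.

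Granting this step, the remainder assembles results already in hand. Applying \Cref{lem: EnvyMatch} to $B$ with $b=1$ yields, in polynomial time, an envy-free allocation with transfers $(A,t)$ in which $A=(B_{\pi(1)},\dots,B_{\pi(n)})$ is the maximum-weight matching of $B$'s bundles to the agents and $\sum_i|t_i|\le 2n^2$; this immediately secures the transfer bound. For the welfare bound I would reuse the argument of \Cref{thm: NSW} verbatim, but with $B$ playing the role of $A^*$: that proof uses only that $A$ reassigns the bundles of the reference allocation by a maximum-weight (utilitarian) matching, that $(A,t)$ is envy-free, and that $\sum_i t_i=0$ --- all of which hold for the $(A,t)$ returned above, since the natural transfers are a valid envy-free set --- and it concludes $\NSW(A,t)\ge e^{-1/e}\NSW(B)$. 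Chaining gives $\NSW(A,t)\ge e^{-1/e}\NSW(B)\ge \tfrac12 e^{-1/e}\NSW(A')\ge \tfrac12\,\alpha\, e^{-1/e}\NSW(A^*)$, the claimed bound; observe that $\alpha$ enters only through $\NSW(A')\ge\alpha\NSW(A^*)$ and the factor $\tfrac12$ enters only through the conversion step.

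The main obstacle is the key step itself: converting $A'$ into an \EFone allocation in polynomial time while losing at most a factor of $2$ in Nash welfare. The guiding principle is the fact that any \emph{local} maximum of Nash welfare under single-item transfers is \EFone --- if agent $i$ envies $j$ by more than one item, then moving the item $g\in B_j$ minimizing $v_j(g)/v_i(g)$ strictly increases the product $v_i(B_i)\,v_j(B_j)$, contradicting local optimality. Hence a sequence of Nash-welfare-improving single-item moves starting from $A'$ can only increase $\NSW$ while driving the allocation toward \EFone. The difficulty, and precisely where the factor $\tfrac12$ and the polynomial running time must be reconciled, is that exact local optimization need not terminate in polynomial time; I would therefore halt as soon as the allocation becomes \EFone rather than fully locally optimal, bound the number of improving moves performed, and argue that this controlled stopping rule forfeits at most a factor of $2$ in Nash welfare relative to $A'$. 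The degenerate cases of zero-value items and pairs with $v_i(B_i)=0$ need minor separate handling but affect neither the envy bound nor the welfare bound.
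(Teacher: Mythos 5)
Your overall architecture coincides with the paper's: produce an intermediate \EFone allocation $B$ with $\NSW(B)\ge\tfrac12\NSW(A')$, observe that for additive valuations with item values at most $1$ an \EFone allocation has $1$-bounded envy, invoke Lemma~\ref{lem: EnvyMatch} with $b=1$ to get the $2n^2$ transfer bound, and rerun the argument of Theorem~\ref{thm: NSW} with $B$ in place of $A^*$ to collect the $e^{-1/e}$ factor. That portion is correct, and your observation that the proof of Theorem~\ref{thm: NSW} uses only the maximum-weight reassignment, envy-freeness of $(A,t)$, and $\sum_i t_i=0$ is exactly the remark the paper itself makes.

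The gap is the step you flag as the main obstacle: a polynomial-time conversion of an arbitrary allocation into an \EFone allocation losing at most a factor $2$ in Nash welfare. The paper does not prove this step; it cites the algorithm of Caragiannis et al.~\cite{CGH19}, which delivers precisely this guarantee. Your proposed substitute --- local search by Nash-welfare-improving single-item moves, halted once the allocation becomes \EFone --- does not close the gap. The fact that a local optimum of $\NSW$ under such moves is \EFone is true, but it yields no bound on how many improving moves are needed before the allocation \emph{first} becomes \EFone: the per-move improvement can be arbitrarily small, and the walk may take superpolynomially many steps without passing through any \EFone allocation before termination. Your accounting is also internally inconsistent: since improving moves only increase $\NSW$, a stopping rule that halts at the first \EFone allocation would forfeit nothing, so there is nothing for your factor $2$ to pay for --- a sign that the factor $2$ must arise from a different mechanism altogether (in \cite{CGH19} it comes from discarding items to eliminate envy, not from truncating a local search). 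As written, the key lemma is asserted rather than proved; with that lemma supplied (e.g., by citation), the rest of your argument goes through.
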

\begin{proof}
Let $B$ be the $\alpha$-approximate allocation to the maximum Nash social welfare; that is $\frac{\NSW(B)}{\NSW(A^*)}\geq \alpha$.
Now Caragiannis et al~\cite{CGH19} gave a polytime algorithm which, given input $B$, outputs an 
EF1 allocation $B'$ with a Nash social welfare guarantee of $\frac{\alpha}{2}$.    
	
Next, recall the proof of~Theorem~\ref{thm: NSW}. Observe that, during the proof, we did not use the fact that $A^*$ maximizes 
Nash social welfare. Thus the $e^{-1/e}$ approximation ratio holds if we start with any other allocation $\hat{A}$ instead of $A^*$.
That is by reallocation the bundles of $\hat{A}$ we obtain an envy-freeable allocation $A$ whose Nash social welfare is that least
a factor $e^{-1/e}$ of that of $A^*$. In particular, we can do this for the allocation $\hat{A}=B'$ given by Caragiannis et al~\cite{CGH19}.
So, by Theorem~\ref{thm: NSW}, there exists an envy-free allocation with transfers $(A,t)$ such that $\frac{\NSW(A,t)}{\NSW(B')}\geq e^{-1/e}$. Now
\begin{align*}
		\frac{\NSW(A,t)}{\NSW(A^*)}
		\ =\ \frac{\NSW(B')}{\NSW(A^*)}\cdot \frac{\NSW(A,t)}{\NSW(B')} 
		\ \geq\  \frac{1}{2}\alpha \cdot e^{-1/e}
\end{align*}
Furthermore, because $B'$ is EF1 and $A$ is obtained by the same procedure as in Theorem~\ref{thm: EnvyMatch}, we obtain
transfer payments with $\sum_{i}|t_i|\leq 2n^2$.  
\end{proof}
We remark that, for additive valuations, polytime algorithms do exist to find allocations that $\alpha$-approximate the maximum Nash social welfare.
Specifically, Barman et al.~\cite{BKV18} present an algorithm with an approximation guarantee of $\alpha=\frac{1}{1.45}$.
Together with Theorem~\ref{thm: PolyNSW}, we thus obtain in polytime an envy-free allocation with negligible transfers and a Nash social welfare 
guarentee of $\frac{1}{2.9}e^{-1/e}$.

Better existence bounds can be obtained for the additive case if we remove the requirement of a polynomial time algorithm. 
A well-known result of~\citet{CKM19} states that for additive valuations, the Nash welfare maximizing allocation is EF1. In fact, 
a recent result of~\citet{BCI20} provides a similar result for the case of \textit{matroid rank} valuation functions, a sub-class of 
submodular functions. A valuation function is matroid rank if it is submodular, and the marginal value of any item is 
binary (i.e. for any set $S$ of items and any item $x$ not in $S$, $v_i(S\cup\{x\}) - v_i(S) \in \{0,1\}$). Here,
a NSW-maximizing allocation is EF1\cite{BCI20}. Combining this with Lemma~\ref{lem: EnvyMatch}, 
the corresponding envy-free allocation with transfers $(A,t)$ has transfers satisfying $\sum_{i}|t_i|\leq 2n^2$. Further, by Theorem~\ref{thm: NSW}, 
we have $\frac{\NSW(A,t)}{\NSW(A^*)}\geq e^{-1/e}$ as desired. 

\begin{theorem}\label{thm: matroid}
For matroid rank valuations, there exists an envy-free allocation with transfers $(A,t)$ with
$\frac{\NSW(A,t)}{\NSW(A^*)}\geq e^{-1/e}$ and $\sum_{i}|t_i|\leq 2n^2$. \qed
\end{theorem}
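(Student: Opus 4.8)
The plan is to combine three ingredients already in hand: the structural result of \citet{BCI20} that a Nash-welfare-maximizing allocation is \EFone for matroid rank valuations, the reallocation-and-transfer argument of Theorem~\ref{thm: NSW}, and the transfer bound of Lemma~\ref{lem: EnvyMatch}. The strategy is to apply the latter two results to the \emph{same} allocation, so that the welfare guarantee and the bounded-transfer guarantee hold simultaneously rather than for two unrelated constructions.

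First I would let $A^*$ be a Nash-welfare-maximizing allocation and invoke \citet{BCI20} to conclude that $A^*$ is \EFone. The key observation is that, for matroid rank valuations, \EFone upgrades to $1$-bounded envy: since marginals are binary, for any $i,j$ there is $g\in A^*_j$ with $v_i(A^*_i)\geq v_i(A^*_j\setminus g)$, and the marginal $v_i(A^*_j)-v_i(A^*_j\setminus g)\leq 1$, whence $v_i(A^*_j)-v_i(A^*_i)\leq 1$ (the case $A^*_j=\emptyset$ being trivial by monotonicity). Thus $A^*$ has $1$-bounded envy, so applying Lemma~\ref{lem: EnvyMatch} with $b=1$ yields an envy-free allocation with transfers $(A,t)$ satisfying $\sum_i|t_i|\leq 2n^2$, where $A$ is precisely the maximum-weight matching reallocation of the bundles of $A^*$.

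It remains to establish the welfare bound for this particular $(A,t)$. Here I would appeal to the observation already recorded in the proof of Theorem~\ref{thm: PolyNSW}: the argument of Theorem~\ref{thm: NSW} never uses that the starting allocation maximizes Nash social welfare, but only that $A$ is obtained from it by a maximum-weight matching reallocation and that $(A,t)$ is envy-free with $\sum_i t_i=0$. Since the allocation $A$ produced by Lemma~\ref{lem: EnvyMatch} is exactly this matching reallocation, and since its natural transfers are valid (zero-sum) and render $(A,t)$ envy-free, the bound $\frac{\NSW(A,t)}{\NSW(A^*)}\geq e^{-1/e}$ applies verbatim — and here $A^*$ is genuinely the Nash maximizer, so no extra approximation factor is incurred.

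The main obstacle is the bookkeeping needed to apply both results to a single $(A,t)$. One must verify that the welfare analysis of Theorem~\ref{thm: NSW} depends only on envy-freeness and the zero-sum property of $t$, so that the specific transfers returned by Lemma~\ref{lem: EnvyMatch} (built from minimum subsidies) qualify, and that matroid rank valuations, being monotone, lie within the ``general valuations'' scope of Theorem~\ref{thm: NSW}. Once these compatibility checks are confirmed, the two displayed inequalities hold for the same allocation with transfers and no further computation is required.
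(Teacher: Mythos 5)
Your proposal is correct and follows essentially the same route as the paper: invoke the result of \citet{BCI20} that an NSW-maximizing allocation is \EFone under matroid rank valuations, observe that binary marginals upgrade \EFone to $1$-bounded envy, apply Lemma~\ref{lem: EnvyMatch} with $b=1$ for the transfer bound, and reuse the argument of Theorem~\ref{thm: NSW} for the welfare guarantee. You in fact spell out two steps the paper leaves implicit (the \EFone-to-$1$-bounded-envy deduction and the compatibility check that both guarantees hold for the same $(A,t)$), which is a welcome addition but not a different proof.
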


\section{Transfer Payments and Social Welfare}\label{sec:usw}

In this section we present our results on utilitarian social welfare.

\subsection{The Necessity of Non-Negligible Transfer Payments}
To begin, recall that an allocation $B$ has {\em b-bounded envy} if $v_i(B_j)-v_i(B_i)\leq b$ for every pair of agents $i,j\in N$.
Without transfers, allocations with  {\em b-bounded envy} may have very low welfare.
\begin{lemma}\label{imposs}
For utilitarian social welfare, there exist instances with additive valuation functions such that any allocation with $b$-bounded envy 
has a welfare guarantee of at most $2\sqrt{\frac{b}{m}}+\frac{1}{n}$. 	
\end{lemma}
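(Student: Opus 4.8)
The plan is to exhibit a single hard instance with additive valuations in which obtaining high utilitarian welfare forces the items to be concentrated on one highly-valued agent, while that very concentration creates large envy from the remaining agents. First I would take $n$ agents and $m$ items, give agent $1$ a per-item value of $1$ (so $v_1(S)=|S|$), and give every other agent $j$ a per-item value of $\epsilon$ (so $v_j(S)=\epsilon|S|$), where $\epsilon\in(0,1)$ is a parameter to be optimized at the end. This respects the normalization that the maximum marginal value of any item is at most $1$. Since $1>\epsilon$, the welfare-maximizing allocation $A^*$ assigns every item to agent $1$, so $\sw(A^*)=m$.

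The core of the argument is to bound, for an arbitrary allocation $B$ with $b$-bounded envy, how many items agent $1$ can receive. The binding constraint is the envy of the low-value agents toward agent $1$: for each $j\neq 1$ we have $v_j(B_1)-v_j(B_j)=\epsilon\left(|B_1|-|B_j|\right)\le b$, that is, $|B_1|-|B_j|\le b/\epsilon$. Summing this over the $n-1$ agents $j\neq 1$ and using $\sum_{j\neq 1}|B_j|=m-|B_1|$ yields
$$
|B_1|\ \le\ \frac{m}{n}+\frac{(n-1)b}{n\epsilon}.
$$
Agent $1$'s own envy is automatically $b$-bounded in the relevant regime and imposes no further restriction.

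Next I would write the welfare of $B$ in the form $\sw(B)=|B_1|+\epsilon\left(m-|B_1|\right)=(1-\epsilon)|B_1|+\epsilon m$. Substituting the upper bound on $|B_1|$, dividing by $\sw(A^*)=m$, and discarding the factors $(1-\epsilon)\le 1$ and $\tfrac{n-1}{n}\le 1$ gives
$$
\frac{\sw(B)}{\sw(A^*)}\ \le\ \frac{1}{n}+\frac{b}{m\epsilon}+\epsilon.
$$

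Finally the instance is fixed by choosing $\epsilon$ to minimize the right-hand side. By the arithmetic–geometric mean inequality, $\frac{b}{m\epsilon}+\epsilon$ is minimized at $\epsilon=\sqrt{b/m}$, where it equals $2\sqrt{b/m}$, so the claimed bound $\frac{\sw(B)}{\sw(A^*)}\le 2\sqrt{\frac{b}{m}}+\frac{1}{n}$ holds for every $b$-bounded-envy allocation $B$. The only real subtlety is in setting up the instance so as to exploit the right tension — welfare wants concentration onto agent $1$, while the $b$-bounded envy of the cheap agents penalizes it — and then calibrating $\epsilon$ so that these two opposing forces balance; the remaining steps are elementary counting and a one-variable optimization.
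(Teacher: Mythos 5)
Your proof is correct and follows essentially the same route as the paper's: the same one-high-agent/many-low-agents instance, the same summation of the low-value agents' envy constraints to bound the number of items the high-value agent can hold, and the same calibration $\epsilon=\sqrt{b/m}$. The only differences are cosmetic (which agent is labeled as the high-value one, and counting items versus fractions of items).
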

\begin{proof}
		 Consider the following instance with additive valuations. Let $v_{n,j}=1$ for each $j\in M$ and let 
		 $v_{ij}=\epsilon$ for all $i\neq n$ and all $j\in M$. Evidently, to maximize utilitarian social welfare we simply give all the items to
		  agent $n$. So $\sw(A^*)=m$. Because the items are interchangeable for every agent, any allocation $A$ can be described 
		  as $(y_1,\cdots,y_{n-1},y_n=x)$, where $y_i$ is the fraction of items allocated to agent $i$.
		  to the agent. Since every item must be allocated, we have $\sum_{i=1}^{n-1} y_i = (1-x)$. 
		  The corresponding welfare guarantee for the allocation $A$ is then $\frac{\sw(A)}{\sw(A^*)} = (1-\epsilon)x+\epsilon$. 
		 
		 Now suppose $A$ has $b$-bounded envy. Therefore, $v_i(A_j)-v_i(A_i)\leq b$, for any pair of agents $i,j\in N$.
		 In particular, $m(y_i-x)\leq b$ since agent $n$ cannot envy agent $i$ too much and
		  $\epsilon m(x-y_i)\leq b$ since agent $i$ cannot envy agent $n$ too much. 
		  Summing the later inequality over all agents $i$ gives $\epsilon m((n-1)x-(1-x))\leq (n-1)b$. This 
		 implies $x\leq (1-\frac{1}{n})\frac{b}{\epsilon m}+\frac{1}{n}$. Thus 
		 \begin{align*}
		 	\frac{\sw(A)}{\sw(A^*)} &=(1-\epsilon)x+\epsilon \\
		 	&\leq  (1-\epsilon)\left(\left(1-\frac{1}{n}\right) \frac{b}{\epsilon m}+\frac{1}{n}\right) +\epsilon \\
		 	&\leq 2\sqrt{\frac{b}{m}}-\frac{b}{m}+\frac{1}{n}\\
			&\leq 2\sqrt{\frac{b}{m}}+\frac{1}{n}
		 \end{align*}
		 Here the second inequality holds by setting $\epsilon=\sqrt{\frac{b}{m}}$.  		 
	\end{proof}

Lemma~\ref{imposs} implies that any EF$k$ allocation in the given example, with $k=o(m)$, cannot provide a welfare guarantee 
that is significantly higher than $\frac{1}{n}$. The natural question to ask, now, is whether the problem inherent in Lemma~\ref{imposs} 
can be rectified with a small quantity of transfers. On the positive side, the result of~\citet{BDN20} shows that a small quantity of 
subsidy independent of the number of items is always sufficient to eliminate envy. 
A similar result also extends to the corresponding natural transfer payments. Combining this result with Lemma~\ref{lem:rho-mean} tells us 
that a utilitarian welfare guarantee of $\frac{1}{n}$ can be achieved alongside envy-freeness with a negligible total transfer. Unfortunately, for 
the above example, the Iterated Matching Algorithm of~\cite{BDN20} returns an allocation whose social welfare is only a $\frac{1}{n}$-fraction 
of the optimal welfare. The following corollary shows that this was inevitable: unlike for NSW, in order to make any improvement above 
this threshold, non-negligible transfers are required.


\begin{corollary}\label{cor: budget}
For any $\alpha \in \left[\frac{1}{n},1\right]$, there exists an instance with additive valuations such that any envy-free allocation with transfers $(A,t)$ satisfying 
$\frac{\sw(A,t)}{\sw(A^*)}\geq\alpha$ requires a total transfer $\sum_{i\in N} |t_i|\geq \frac{1}{4}\left(\alpha-\frac{1}{n}\right)^2 m$. 
\end{corollary}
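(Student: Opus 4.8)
The plan is to derive this corollary directly from Lemma~\ref{imposs}, reusing the very instance constructed there, by showing that the total transfer $\sum_{i\in N}|t_i|$ controls the residual envy of the underlying allocation $A$ and then invoking the bounded-envy welfare bound. The crux of the argument is a single observation linking transfers to bounded envy; everything after that is algebraic rearrangement.

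First I would relate the total transfer to a bounded-envy parameter. If $(A,t)$ is envy-free with transfers, then for every pair $i,j\in N$ we have $v_i(A_i)+t_i\geq v_i(A_j)+t_j$, i.e. $v_i(A_j)-v_i(A_i)\leq t_i-t_j$, so the envy of any pair is at most $t_{\max}-t_{\min}$, where $t_{\max}=\max_i t_i$ and $t_{\min}=\min_i t_i$. The key point is that because transfers sum to zero, the positive transfers and the (absolute values of the) negative transfers have equal sum, each equal to $\tfrac12\sum_{i\in N}|t_i|$; since $t_{\max}$ and $-t_{\min}$ are individual summands of these two groups, I get $t_{\max}-t_{\min}\leq \sum_{i\in N}|t_i|$. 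Hence $A$ has $b$-bounded envy with $b=\sum_{i\in N}|t_i|$. I would also note that, since $\sum_i t_i=0$, the welfare is unaffected by the transfers: $\sw(A,t)=\sum_i\big(v_i(A_i)+t_i\big)=\sum_i v_i(A_i)=\sw(A)$, so the welfare guarantee with transfers equals that of the underlying allocation.

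Finally I would apply Lemma~\ref{imposs} to its instance: any $b$-bounded-envy allocation there satisfies $\sw(A)/\sw(A^*)\leq 2\sqrt{b/m}+\tfrac1n$. Substituting $b=\sum_i|t_i|$ together with the hypothesis $\sw(A,t)/\sw(A^*)\geq\alpha$ gives $\alpha\leq 2\sqrt{\sum_i|t_i|/m}+\tfrac1n$. Since $\alpha\geq\tfrac1n$, the quantity $\alpha-\tfrac1n$ is nonnegative, so I may square $\tfrac12\big(\alpha-\tfrac1n\big)\leq\sqrt{\sum_i|t_i|/m}$ to conclude $\sum_{i\in N}|t_i|\geq \tfrac{m}{4}\big(\alpha-\tfrac1n\big)^2$, exactly as claimed.

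The main obstacle is the first step: the nonobvious part is that the \emph{total} transfer (rather than, say, twice the largest individual transfer) is enough to bound every pairwise envy, and this is precisely where the zero-sum constraint on transfer payments is essential. Once that link is in place, the remaining steps are routine, with the only mild care needed being the sign condition $\alpha\geq\tfrac1n$ that legitimizes squaring the inequality.
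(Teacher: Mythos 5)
Your reduction---zero-sum transfers bound every pairwise envy via $v_i(A_j)-v_i(A_i)\leq t_i-t_j\leq t_{\max}-t_{\min}\leq\sum_{i}|t_i|$, so the underlying allocation has $b$-bounded envy with $b=\sum_i|t_i|$, and $\sw(A,t)=\sw(A)$---is correct, and it is a genuinely different route from the paper's proof, which re-derives the transfer inequalities from scratch on the same instance and lower-bounds the total transfer by $|t_n|$ alone. However, there is a real gap in how you invoke Lemma~\ref{imposs}. That lemma does not supply a single instance on which every $b$-bounded-envy allocation has welfare ratio at most $2\sqrt{b/m}+\frac{1}{n}$ for all $b$ simultaneously: its instance is parameterized by $\epsilon$, the bound it actually proves for a fixed instance is $\frac{\sw(A)}{\sw(A^*)}\leq(1-\epsilon)\left(\left(1-\frac{1}{n}\right)\frac{b}{\epsilon m}+\frac{1}{n}\right)+\epsilon$, and the clean form $2\sqrt{b/m}+\frac{1}{n}$ is obtained only by setting $\epsilon=\sqrt{b/m}$, i.e., by tuning the instance to $b$. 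For a fixed small $\epsilon$ and $b=\epsilon m$, the allocation giving everything to agent $n$ is $b$-bounded-envy with welfare ratio $1$, which far exceeds $2\sqrt{b/m}+\frac{1}{n}=2\sqrt{\epsilon}+\frac{1}{n}$. Since the corollary demands one instance that defeats \emph{all} high-welfare allocations, while $b=\sum_i|t_i|$ varies with the allocation, you cannot quote the lemma's conclusion as a black box: the quantifiers do not match.

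The repair is short but necessary: fix the instance by choosing $\epsilon$ as a function of $\alpha$ alone, say $\epsilon=\frac{1}{2}\left(\alpha-\frac{1}{n}\right)$, and use the fixed-$\epsilon$ inequality from the lemma's proof, which relaxes to $\frac{\sw(A)}{\sw(A^*)}\leq\frac{b}{\epsilon m}+\frac{1}{n}+\epsilon$. Your reduction then gives $\alpha\leq\frac{b}{\epsilon m}+\frac{1}{n}+\epsilon$, hence $b\geq\epsilon m\left(\alpha-\frac{1}{n}-\epsilon\right)=\frac{m}{4}\left(\alpha-\frac{1}{n}\right)^2$, exactly the claimed bound. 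This mirrors what the paper does (it picks $\epsilon=1-\sqrt{(1-\alpha)/(1-1/n)}$ and works with the transfers directly); with that one correction your bounded-envy argument goes through and is arguably the more modular of the two proofs.
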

\begin{proof}
	Take the same instance as in Lemma~\ref{imposs}. Now for utilitarian social welfare, we have $\sw(A)=\sw(A,t)$ 
	as $\sum_{i\in N} \left(v_i(A_i)+t_i \right) = 
	\sum_{i\in N} v_i(A_i)+ \sum_{i\in N} t_i=\sum_{i\in N} v_i(A_i)$. Let $A^*$ be the allocation that maximizes the social welfare.
	Thus 
	 \begin{align} \label{uswlabel1}
	\frac{\sw(A,t)}{\sw(A^*)} \ =\ (1-\epsilon)x+\epsilon \ \geq\ \alpha
	 \end{align}	
	Next, observe that $x=y_n\geq y_i$, for each $i$ otherwise the allocation $A$ is not envy-freeable. Thus $t_n \leq 0$. Then, by envy-freeness of $(A,t)$,
	we must have $mx +t_n\geq my_i+t_i $ and $\epsilon m y_i +t_i \geq \epsilon m x +t_n $. It follows that
	 \begin{align*}
	 (n-1)\cdot\left(\epsilon m x +t_n\right) 
	 \ \le \ \epsilon m\cdot \sum_{i=1}^{n-1}y_i + \sum_{i=1}^{n-1}t_i
	\ = \ \epsilon m\cdot (1-x) - t_n
	 \end{align*}	
Rearranging we obtain $n\cdot\left(\epsilon m x +t_n\right) \leq \epsilon m$. In particular,
	 \begin{align} \label{uswlabel2}
	  -t_n \ \geq\  \epsilon m\cdot \left(x-\frac{1}{n}\right) 
	 \end{align}
Combining (\ref{uswlabel1}) and (\ref{uswlabel2}) we get
	 \begin{align*}
	 \sum_{i\in N} |t_i|
	 \ \ge \ |t_n|
	 \ \geq\  -t_n
	  \ \geq\  \epsilon m\cdot\left(x-\frac{1}{n}\right) 
	 \ \geq\ \epsilon m\cdot\left(\frac{\alpha-\epsilon}{1-\epsilon}-\frac{1}{n}\right)
	  \end{align*}
Finally, choosing $\epsilon=1-\sqrt{\frac{1-\alpha}{1-\frac{1}{n}}}$ gives the desired bound. 
\end{proof}

\subsection{Constant-Sum Valuations}
So, for utilitarian social welfare, non-negligible transfers are required to ensure both envy-freeness and high welfare.
Recall, though, that balancing constraints on the valuation functions have been used in the literature to circumvent 
impossibility bounds on welfare. The reader may wonder if such constraints could be used to bypass the result in Corollary~\ref{cor: budget}: are negligible transfer payments sufficient to obtain high welfare when the valuation functions are constant-sum?
The answer is {\sc no}, as we shall see in the subsequent theorem.

In recent work, \citet{BBS20} considered the case of subadditive valuations with the constant-sum condition, and gave a polynomial-time algorithm that finds an EF1 allocation with social welfare at least $\Omega(\frac{1}{\sqrt{n}})$ of the optimal welfare. Applying the algorithm of Lemma~\ref{lem: EnvyMatch} to the resulting allocation gives us an envy-free allocation with negligible transfers and welfare ratio $\Omega(\frac{1}{\sqrt{n}})$. Once again, we show that this threshold cannot be crossed without non-negligible transfers.
\begin{theorem}	
There exist instances with constant-sum additive valuations such that any envy-free 
	allocation with transfers $(A,t)$ satisfying $\frac{\sw(A,t)}{\sw(A^*)}\geq\alpha$  
	has a total transfer $\sum_{i\in N} |t_i| \geq (\alpha-\frac{2}{\sqrt{n}}) \frac{m}{\sqrt{n}}$, for any $\alpha\in [\frac{2}{\sqrt{n}},1]$.
\end{theorem}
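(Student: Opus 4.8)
The plan is to exhibit a single constant-sum additive instance, analogous in spirit to the instance of Lemma~\ref{imposs} but \emph{normalized}, and then to push the envy-freeness constraints through a global counting argument rather than an inequality for one distinguished agent. Concretely, I would assume $n$ is a perfect square, set $V=m/\sqrt n$, and partition the $m$ items into $\sqrt n$ blocks $B_1,\dots,B_{\sqrt n}$, each of size $m/\sqrt n$. Use $\sqrt n$ \emph{specialist} agents, where specialist $k$ values every item of $B_k$ at $1$ and every other item at $0$, together with $n-\sqrt n$ \emph{uniform} agents, each of whom values \emph{every} item at $1/\sqrt n$. Every agent then has total value exactly $V=m/\sqrt n$, so the instance is constant-sum, and the maximum marginal value is $1$. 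Since each item is worth $1$ to its specialist and at most $1/\sqrt n$ to everyone else, the welfare-maximizing allocation hands each block to its specialist, giving $\sw(A^*)=m$.

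The first easy step is to record that utilitarian welfare is transfer-invariant: $\sw(A,t)=\sum_i\big(v_i(A_i)+t_i\big)=\sw(A)$, so the hypothesis $\sw(A,t)\ge \alpha\,\sw(A^*)$ is just $\sw(A)\ge\alpha m$. Write $I_E$ for the total number of items held by the specialists and $I_U=m-I_E$ for those held by the uniform agents, and let $\mathrm{eff}=\sum_{k}v_k(A_k)\le I_E$ be the welfare extracted by the specialists. Because the uniform agents contribute exactly $I_U/\sqrt n$, the welfare bound $\mathrm{eff}+I_U/\sqrt n\ge \alpha m$ together with $\mathrm{eff}\le I_E$ rearranges to the clean estimate $I_E\ge(\alpha-\tfrac1{\sqrt n})\,m$. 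This quantifies the intuition that high welfare forces the specialists to hoard a large share of the items.

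The heart of the argument is to convert this hoarding into a transfer lower bound by summing the envy-free inequalities over every specialist/uniform pair. For a uniform agent $u$ and any agent $a$, envy-freeness reads $|A_u|/\sqrt n+t_u\ge |A_a|/\sqrt n+t_a$, since a uniform agent values a bundle purely by its cardinality. Summing this over all $u$ in the uniform set and all $a=k$ in the specialist set, and writing $T_E=\sum_{k}t_k$ and $T_U=\sum_u t_u=-T_E$ (transfers sum to zero), the cardinality terms collapse: the left side carries $I_U$ and the right side carries $(n-\sqrt n)I_E/\sqrt n$, while the payment terms assemble (after using $T_U=-T_E$) into $nT_E$. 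After substituting $I_U=m-I_E$ the inequality simplifies to $nT_E\le m-\sqrt n\,I_E$, i.e. $T_E\le (m-\sqrt n\,I_E)/n$. Feeding in $I_E\ge(\alpha-\tfrac1{\sqrt n})m$ gives $T_E\le \tfrac{m}{n}\big(2-\alpha\sqrt n\big)$, which is negative once $\alpha>2/\sqrt n$. Finally, by the triangle inequality $\sum_{i\in N}|t_i|\ge\sum_{k}|t_k|\ge|T_E|=-T_E\ge(\alpha-\tfrac{2}{\sqrt n})\tfrac{m}{\sqrt n}$, exactly as claimed.

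The main obstacle is the design of the instance rather than the subsequent algebra: one must choose the block sizes, the uniform value $1/\sqrt n$, and the number of specialists so that (i) the instance is exactly constant-sum, (ii) the welfare-free allocation (all items to the uniform agents) already attains ratio $1/\sqrt n$, so the threshold is $2/\sqrt n$ and not a constant, and (iii) the pooled envy inequality telescopes with the factor $(n-\sqrt n)/\sqrt n$ recombining into $n/\sqrt n=\sqrt n$. I would also verify the boundary bookkeeping — that $\sqrt n$ and $m/\sqrt n$ may be taken integral (handle by rounding or by assuming $n$ a perfect square with $\sqrt n\mid m$), that there is at least one uniform agent, and, crucially, that the counting step never assumed specialists hold only their own block items, so that the bound holds against \emph{every} envy-free allocation meeting the welfare target.
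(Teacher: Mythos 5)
Your proposal is correct and follows essentially the same route as the paper: the same instance (with $\sqrt{n}$ specialist agents owning disjoint blocks and $n-\sqrt{n}$ uniform agents valuing every item at $1/\sqrt{n}$), the same use of the welfare constraint to force the specialists to hold at least $(\alpha-\tfrac{1}{\sqrt{n}})m$ items, and the same pooled summation of the envy-freeness inequalities over all uniform/specialist pairs to extract the transfer lower bound. Your bookkeeping in terms of item counts $I_E,I_U$ rather than block fractions is a cosmetic difference (and is slightly more careful about specialists possibly holding items outside their own blocks), but the argument is the paper's.
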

\begin{proof}
	Consider an instance with $m$ items and $n$ agents. Divide the items into $\sqrt{n}$ sets, 
	each of cardinality $\frac{m}{\sqrt{n}}$. Let $B_{\ell}$ be the set of items  $\{(\ell-1)\frac{m}{\sqrt{n}}+1, (\ell-1)\frac{m}{\sqrt{n}}+2,\cdots, \ell \frac{m}{\sqrt{n}}\}$. We now define a collection of constant-sum additive valuation functions. We partition the set of agents into two parts; agents in the set $H=\{1,\cdots, \sqrt{n}\}$ have high value for a small number of items, and agents in the set $L=\{\sqrt{n}+1,\cdots,n\}$ have low value for a large number of items.
	A high value agent $i$ has valuations $v_{ij}=1$ for $j\in B_i$ and zero otherwise. Thus for each agent $i\in H$ there is a corresponding set $B_i$ which it values. Each low value agent has a uniform valuation 
	of $v_{ij}=\frac{1}{\sqrt{n}}$ for all $j\in M$. 
	Observe that the value each agent has for the grand bundle is exactly $\frac{m}{\sqrt{n}}$, that is, constant-sum.	
	Note that any allocation to a high value agent $i$ can be described by the fraction 
	of $B_i$ which it receives. Consider an envy-freeable allocation $A$ that assigns an $x_i$-fraction and a $y_{ki}$-fraction of $B_i$ to $i\in H$ and $k\in L$ 
	respectively. By envy-freeability we must have $x_i \geq y_{ki}$ for all $i\in H$ and $k\in L$. We also have 
	that $x_i+\sum_{k\in L}y_{ki}=1$. Observe that the utilitarian social welfare is maximized by allocating $B_i$ to the high value agent $i$; this allocation satisfies $\sw(A^*)=m$. We then have
	\begin{align*}
		\frac{\sw(A)}{\sw(A^*)} 
		&\ =\ \frac{1}{m}\left(\frac{m}{\sqrt{n}}\sum_{i\in H}x_i+\frac{m}{\sqrt{n}}\sum_{i\in H}\sum_{k\in L}\frac{y_{ki}}{\sqrt{n}}\right) \\ 
		&\ =\ \frac{1}{m}\left(\frac{m}{\sqrt{n}}\sum_{i\in H}x_i+\frac{m}{n}\sum_{i\in H}(1-x_i)\right)\\ 
		&\ =\ \frac{1}{\sqrt{n}}\left(\left(1-\frac{1}{\sqrt{n}}\right)\sum_{i\in H}x_i+1\right)\\
		 &\ \geq\ \alpha
	\end{align*}
From this we can infer that $\sum_{i\in H}x_i \geq \sqrt{n}\alpha -1$. Now, let $t$ be valid transfer payments. 
By envy-freeness, we see that for any $i,k$ 
\[
\frac{m}{\sqrt{n}}\cdot\sum_{j\in H}\frac{y_{kj}}{\sqrt{n}}+t_k\geq \frac{m}{\sqrt{n}}\cdot \frac{x_i}{\sqrt{n}}+t_i 
\] 
First summing over $i\in H$ and then summing over $k\in L$ gives
\begin{align*}
\frac{m}{\sqrt{n}}\sum_{k\in L}\sum_{j\in H}{y_{kj}}+\sqrt{n}\sum_{k\in L}t_k 
\ \geq\  m\left(1-\frac{1}{\sqrt{n}}\right)\sum_{i\in H}x_i+(n-\sqrt{n})\sum_{i\in H}t_i 
\end{align*}
Sum $\sum_{i\in N}t_i=0$,  rearranging gives
\begin{align*}
0 \ \geq\  m\sum_{i\in H}x_i-    \frac{m}{\sqrt{n}}  \sum_{i\in H} \left( x_i+\sum_{k\in L}{y_{ki}}\right)+ n\sum_{i\in H}t_i 
\ \geq\  m\left(\sum_{i\in H}x_i-   1\right)+ n\sum_{i\in H}t_i 
\end{align*}
In particular,
\begin{align*}
-\sum_{i\in H}t_i \ \geq\  \frac{m}{n} \left(\sum_{i\in H}x_i-1\right) 
\end{align*}
Now recall that $\sum_{i\in H}x_i \geq \sqrt{n}\alpha -1$. Thus
\begin{align*}
\sum_{i\in N}|t_i| 
\ \geq\ -\sum_{i\in H}t_i 
\ \geq\  \frac{m}{\sqrt{n}}\left(\alpha-\frac{2}{\sqrt{n}}\right) 
\ \geq\  0   &\qedhere
\end{align*}

\end{proof}

So non-negligible transfer payments are required even assuming constant-sum valuations.
This adds to our collection of negative results for utilitarian social welfare. 
Are any positive results possible? Specifically, can we at least match the lower bounds on transfer payments 
inherent in the these negative results. We will now show this can indeed be approximately achieved.

\subsection{Upper Bounds on Transfer Payments}

To conclude the paper, we present results that upper bound the total transfer required to obtain an envy-free allocation
with a utilitarian social welfare guarantee. We give upper bounds for additive and general valuation functions. In both cases, the bound we obtain is expressed as a function of the maximum value that an agent receives in the welfare-optimal allocation. In particular, while the lower bounds are obtained as functions of $m$, the upper bounds we get are expressed as functions of the product of $n$ and $\max_i v_i(A^*_i)$. In allocations that distribute utility uniformly amongst the agents, these expressions are comparable; even in the worst case, since $v_i(A^*_i)\leq m$ for any $i$, they differ by some function of only $n$, and this difference is independent of the number of items. We begin with the additive case.
 

\begin{theorem}\label{Pos}
	For additive valuations, for any $\alpha\in \left(0,1\right]$, there is an envy-free allocation with transfers $(A,t)$ such that $$\frac{\sw(A,t)}{\sw(A^*)}\geq \alpha$$ 
	with total transfer $\sum_{i\in N}|t_i|\leq n(\alpha\max\limits_i v_i(A^*_i) +2)$.
\end{theorem}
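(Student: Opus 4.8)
The plan is to reduce the statement to finding an \emph{envy-freeable} allocation $A$ whose utilitarian welfare is already at least $\alpha\,\sw(A^*)$, and then to invoke the natural transfer payments. Two preliminary observations drive this. First, for utilitarian welfare the transfers cancel, $\sw(A,t)=\sum_i\bigl(v_i(A_i)+t_i\bigr)=\sw(A)$, so it suffices to guarantee $\sw(A)\ge\alpha\,\sw(A^*)$ together with envy-freeability. Second, if $s_i=l(i)$ denotes the minimum subsidies (the longest-path weights in the envy graph $G_A$) and $t_i=s_i-\bar s$ the natural transfers, then $|t_i|\le\max_j s_j$ for every $i$, since $0\le\bar s\le\max_j s_j$. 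Hence $\sum_i|t_i|\le n\cdot\max_j s_j$, and the whole theorem follows once we exhibit an envy-freeable $A$ with $\sw(A)\ge\alpha\,\sw(A^*)$ whose envy graph has longest-path weight at most $\alpha V+2$, where $V=\max_i v_i(A^*_i)$. Note that I cannot afford the generic bound of \Cref{lem: EnvyMatch}, which would cost a factor of $n^2$; the per-agent bound $\max_j s_j$ is what keeps the total linear in $n$.

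For the construction I would start from the welfare-optimal allocation $A^*$ (which is envy-freeable by \Cref{thm:hsthm1}) and shrink each bundle. Writing $v_i^*=v_i(A^*_i)$, I would greedily choose $S_i\subseteq A^*_i$ with $v_i(S_i)\in[\alpha v_i^*,\alpha v_i^*+1]$; such a subset exists because each marginal value is at most $1$. Keeping $S_i$ with agent $i$ already secures the welfare target, since $\sum_i v_i(S_i)\ge\alpha\sum_i v_i^*=\alpha\,\sw(A^*)$, while capping each agent's own value at $v_i(S_i)\le\alpha v_i^*+1\le\alpha V+1$.

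The heart of the argument is the bound on the longest path. For additive valuations the welfare-optimal allocation assigns each item to a highest-value agent, so $v_i(g)\le v_j(g)$ for every $g\in A^*_j$ and hence $v_i(S_j)\le v_j(S_j)$ for all $i,j$: every bundle is valued most by its own owner. For the allocation $A$ obtained this way (momentarily ignoring leftovers, so $A_i=S_i$) this domination makes envy paths telescope: along any simple path $(i_1,\dots,i_r)$ in $G_A$,
\[
w_A(P)=\sum_{k=1}^{r-1}\bigl(v_{i_k}(A_{i_{k+1}})-v_{i_k}(A_{i_k})\bigr)\le\sum_{k=1}^{r-1}\bigl(v_{i_{k+1}}(A_{i_{k+1}})-v_{i_k}(A_{i_k})\bigr)=v_{i_r}(A_{i_r})-v_{i_1}(A_{i_1}),
\]
which is at most $\max_j v_j(A_j)\le\alpha V+1$; the same telescoping over cycles certifies that $G_A$ has no positive cycle, so $A$ is envy-freeable. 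Plugging $\max_j s_j\le\alpha V+1$ into the reduction yields $\sum_i|t_i|\le n(\alpha V+1)\le n(\alpha V+2)$.

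The main obstacle is that the bundles $S_i$ leave a set of leftover items $R=M\setminus\bigcup_i S_i$ that must still be allocated, and dumping them anywhere can both violate the cap $v_j(A_j)\le\alpha V+1$ and destroy the owner-domination property on which the telescoping rests (indeed, when a single agent is the unique highest-value agent for every item, no capped allocation can keep every bundle dominated by its owner, as the lower-bound instance of \Cref{cor: budget} shows). Resolving this is where the remaining unit of slack should be spent: I would allocate each leftover item to an agent who still has spare capacity below the cap, arguing that a placement keeping each bundle's value to any agent below $\alpha V+2$ exists; failing a clean placement, I would re-match the final bundles to agents via a maximum-weight assignment and re-run the telescoping estimate using the maximum-welfare-over-reassignments characterization of \Cref{thm:hsthm1}, taking care that the additive rounding errors accumulate only to the stated additive constant rather than to a factor of $n$.
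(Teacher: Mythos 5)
Your overall architecture matches the paper's: reduce to bounding the maximum minimum-subsidy $\max_j s_j$ (so that the natural transfers cost at most $n\cdot\max_j s_j$), shrink each $A^*_i$ to a minimal subset worth roughly $\alpha\, v_i(A^*_i)$, and bound the longest envy-path weight of the resulting partial allocation by $\alpha V+1$. Up to that point your argument is sound; your telescoping via owner-domination is a valid (slightly different) route to the same bound the paper extracts from the no-positive-cycle property of \Cref{thm:hsthm1}.

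The gap is exactly where you flag it: the leftover items, and neither of your two proposed fixes is carried out or would work as stated. The ``spare capacity below the cap'' idea is impossible in general --- as you yourself observe, in the instance underlying \Cref{cor: budget} the leftovers have total value $(1-\alpha)m$ to the one agent who values anything, so no placement keeps every bundle's value to every agent below $\alpha V+2$. The re-matching idea is closer in spirit, but ``re-run the telescoping estimate'' fails because owner-domination is destroyed once leftovers are merged into the bundles, and you supply no replacement argument. The paper's resolution is the one idea your proposal is missing: allocate the leftovers by the Iterated Matching Algorithm of \cite{BDN20} as a \emph{separate} envy-freeable allocation $Y$ whose envy graph has longest path weight at most $1$, and then exploit additivity of the valuations, under which the envy-graph arc weights (hence path weights) decompose as $w_A(P)=w_X(P)+w_Y(P)$ for $A_i=X_i\cup Y_i$. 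This yields $w_A(P)\le(\alpha V+1)+1$ and simultaneously certifies envy-freeability of $A$ (every cycle satisfies $w_A(C)=w_X(C)+w_Y(C)\le 0$), with no cap on the combined bundles and no domination property required. Without this decomposition step your proof does not go through.
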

\begin{proof}
We prove this result with a simple polytime algorithm (see Algorithm~\ref{alg: addUpper}) that outputs the desired allocation with transfers $(A,t)$.
\begin{algorithm}[ht]
	\SetAlgoLined
	$A_i\gets \emptyset$ for all $i\in N$\;
	$A^*=(A^*_1,\cdots, A^*_n)\gets $ Welfare-Maximizing Allocation\;
	\For{$i=1$ to $n$}{
		$A_i\gets$ minimal set $X_i\subseteq A^*_i$ with $v_i(X_i)\geq \alpha \cdot v_i(A^*_i)$	
	}	
	Use the Iterated Matching Algorithm of \cite{BDN20} to allocate $M\setminus \bigcup_{i\in N}A_i$\;
	Compute the natural transfers $(t_1,\cdots,t_n)$\;
	\Return{$(A,t)$}
	\caption{Envy-free allocation with high welfare and small transfers for additive valuations}\label{alg: addUpper}
\end{algorithm}

By additivity, the optimal allocation $A^*$ assigns each item in $M$ to an agent with the greatest valuation for that item. Consequently, $X=(X_1,\cdots,X_n)$ maximizes welfare among all reassignments of its bundles, so $X$ is an envy-freeable allocation of the items $\bigcup_{i\in [n]}X_i$. By construction, we have $\operatorname{SW}(X)\geq \alpha \operatorname{SW}(A^*)$. Now let $P$ be any path in the envy-graph $G_X$, without loss of generality, $P=(1,2,\dots, r)$. Then 
	\begin{align*}
		w_X(P) &= \sum_{i=1}^{r-1} v_i(X_{i+1})-v_i(X_{i}) \\ 
		&\leq - \left(v_r(X_1)-v_r(X_r)\right) \\
		&\leq \max\limits_i v_i(X_i) \\
		&\leq\alpha\max\limits_i  v_i(A^*_i) +1
	\end{align*}
	Here the first inequality holds by Theorem~\ref{thm:hsthm1} as the envy-graph contains no positive cycle.
	The last inequality holds by the minimality of $X_i$. Next, let $(Y_1,\cdots,Y_n)$ be the allocation of the 
	remaining items $\bigcup\limits_i (A_i^* \setminus X_i)$ given by the Iterated Matching Algorithm. The key properties we require from this algorithm are that the allocation $Y$ is envy-freeable and that, for any path $P$, weight of the path $w_Y(P)\leq 1$ (see \cite{BDN20}). But Algorithm~\ref{alg: addUpper} simply outputs the allocation $A_i=X_i\cup Y_i$ for each agent $i$. Hence  
	 	\begin{align*}
		w_A(P) \ =\ w_X(P)+w_Y(P) \ \leq\ \alpha\max\limits_i  v_i(A^*_i) +2
		\end{align*}
	Now if we take $s$ to be the minimum subsidy payments required for envy-freeness then $s_i\leq \alpha\max\limits_i  v_i(A^*_i) +2$, for each agent $i$.
	Using the transfer payments $t_i=s_i-\bar{s}$, we have that $\sum_{i}|t_i|\leq n(\alpha\max\limits_i v_i(A^*_i) +2)$,
	 as claimed.  
\end{proof}

%
%

Finally, we show how to upper bound the transfer payments in the case of general valuation functions. Here, the welfare target is limited to the constant factor $\frac{1}{3}$, and the gap between our lower and upper bounds widens by a factor of $n$, but once again, this gap is independent of $m$.
\begin{theorem}\label{thm: upGen}
	For general valuations, for any $\alpha\in \left(0,\frac{1}{3}\right]$,  there is an envy-free allocation with transfers $(A,t)$ such 
	that $$\frac{\sw(A,t)}{\sw(A^*)}\geq \alpha \ \ \ $$ with total transfer $\sum_{i\in N} |t_i|\leq 2n^2\left(3\alpha \max_i v_i\left(A^*_i\right)+2\right)$.
\end{theorem}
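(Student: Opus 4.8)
The plan is to mirror the strategy of \Cref{Pos}, but replacing the additive-specific step (where the optimal allocation $A^*$ is itself envy-freeable because each item goes to its highest-value agent) with the general-valuation machinery already developed in \Cref{lem: EnvyMatch}. First I would extract a partial allocation $X=(X_1,\dots,X_n)$ by selecting, for each agent $i$, a minimal subset $X_i\subseteq A^*_i$ with $v_i(X_i)\ge \alpha\, v_i(A^*_i)$. Since valuations are monotone and the maximum marginal value of any item is at most $1$, minimality forces $v_i(X_i)\le \alpha\, v_i(A^*_i)+1$ for each $i$, and summing the lower bounds gives $\sw(X)\ge \alpha\,\sw(A^*)$. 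Allocating the remaining items $M\setminus\bigcup_i X_i$ via the Iterated Matching Algorithm of~\cite{BDN20} yields a full allocation $A$ with $A_i\supseteq X_i$, so by monotonicity $\sw(A)\ge\sw(X)\ge\alpha\,\sw(A^*)$, and since transfers sum to zero, $\sw(A,t)=\sw(A)\ge\alpha\,\sw(A^*)$, which gives the welfare guarantee.

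The heart of the argument is bounding the required subsidy, and here lies the main obstacle: for general valuations we cannot, as in \Cref{Pos}, conclude that $X$ is envy-freeable by a single highest-value argument, nor can we bound a maximum-weight path in $G_X$ directly by $\max_i v_i(X_i)$, because an arc weight $v_i(X_j)-v_i(X_i)$ can be as large as $\max_i v_i(X_i)\le \alpha\max_i v_i(A^*_i)+1$ but the path can have up to $n-1$ such arcs with no cancellation. The fix is to observe that $X$ has $b$-bounded envy with $b=\alpha\max_i v_i(A^*_i)+1$: indeed, for any $i,j$ we have $v_i(X_j)-v_i(X_i)\le v_i(X_j)\le \max_i v_i(X_i)\le \alpha\max_i v_i(A^*_i)+1$ (using $X_i$'s own value as the bundle values are at most $b$ by minimality and the unit marginal assumption). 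Then I would feed $X$ into \Cref{lem: EnvyMatch}, which first reassigns the bundles via a maximum-weight matching (this can only increase utilitarian welfare, so the $\alpha$ guarantee is preserved) and then produces transfers with $\sum_i|t_i|\le 2bn^2=2n^2(\alpha\max_i v_i(A^*_i)+1)$ for the partial allocation on $\bigcup_i X_i$.

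The remaining step is to account for the leftover items. After reallocating the bundles of $X$ I would allocate the remaining items using the Iterated Matching Algorithm, whose output is envy-freeable with every path in its own envy-graph of weight at most $1$. Combining the two phases, a maximum-weight path $P$ in the final envy-graph $G_A$ satisfies $w_A(P)\le w_X(P)+w_Y(P)$, where the $X$-contribution is at most $2b(n-1)$ by the computation in \Cref{lem: EnvyMatch} and the $Y$-contribution is at most $n-1$ (at most $1$ per arc). This gives a maximum path weight, and hence a minimum subsidy per agent, of at most $2(n-1)(\alpha\max_i v_i(A^*_i)+1)+(n-1)$, which is at most $2n(3\alpha\max_i v_i(A^*_i)+2)$ for the regime $\alpha\le \tfrac13$. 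Converting to natural transfers $t_i=s_i-\bar s$ preserves this per-agent bound, yielding $\sum_i|t_i|\le 2n^2(3\alpha\max_i v_i(A^*_i)+2)$ as claimed. The restriction $\alpha\le\tfrac13$ is exactly what lets the additive $+2$ slack absorb the extra $n$-factor losses from the matching step; verifying that the constants close correctly in this regime is the routine but delicate part of the calculation.
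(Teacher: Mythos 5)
There is a genuine gap, and it sits exactly at the step you call "the fix": the inequality $v_i(X_j)\le \max_\ell v_\ell(X_\ell)$ is false for general valuations. Minimality of $X_j$ is taken with respect to $v_j$, and together with the unit-marginal assumption it only controls \emph{agent $j$'s} value, $v_j(X_j)\le \alpha v_j(A^*_j)+1$; it says nothing about how a \emph{different} agent $i$ values $X_j$. Concretely, let agent $1$ have the $k$-demand valuation $v_1(S)=\min(|S|,k)$ and let agents $2,\dots,n$ be additive with value $\epsilon$ per item, with $\epsilon$ small enough that $A^*$ gives $k$ items to agent $1$ and splits the rest among the others, so $\max_\ell v_\ell(A^*_\ell)=k$. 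Then $X_j$ for $j\ge 2$ is forced to contain $\lceil\alpha(m-k)/(n-1)\rceil$ items, which exceeds $k$ once $m$ is large, so $v_1(X_j)=k$ while $v_1(X_1)\approx\alpha k$; the envy $v_1(X_j)-v_1(X_i)\approx(1-\alpha)k$ is not bounded by $b=\alpha\max_\ell v_\ell(A^*_\ell)+1=\alpha k+1$ (for small $\alpha$ the gap is unbounded, so the claimed total transfer $2n^2(3\alpha\max_\ell v_\ell(A^*_\ell)+2)$ cannot be derived this way). The rest of your plan (welfare accounting, \Cref{lem: EnvyMatch}, the Iterated Matching step, converting subsidies to natural transfers) is fine, but it all rests on this false $b$-bounded-envy claim.

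This failure is precisely why the paper does not mimic \Cref{Pos} by handing each agent a minimal piece of her own optimal bundle. Instead, its algorithm proceeds in rounds $k=1,\dots,n$ with decreasing thresholds $3\alpha v_{\pi(k)}(A^*_{\pi(k)})$ and, in each round, picks the \emph{minimum-cardinality} set that clears the threshold for \emph{some} not-yet-served agent, drawn from any optimal bundle. Taking the minimum over all candidate agents simultaneously is what yields the cross-agent bound $v_i(X_j)\le 3\alpha\max_\ell v_\ell(A^*_\ell)+1$ for every pair $i,j$, which your per-agent construction cannot provide. The price of this design is that an agent's optimal bundle may be raided by someone else, so the welfare guarantee requires a separate charging argument (successful rounds contribute $3\alpha v_{\pi(k)}(A^*_{\pi(k)})$ directly; unsuccessful rounds are charged, with a factor of $2$, to value already captured), and it is \emph{this} accounting that produces the restriction $\alpha\le\frac13$ --- not, as you suggest, the need for the additive $+2$ slack to absorb losses from the matching step.
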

\begin{proof}
We prove this result using an algorithm (see Algorithm~\ref{alg: addUpperGen}) that outputs the desired allocation with transfers $(A,t)$.
\begin{algorithm}[ht]
	\SetAlgoLined
	$B_i\gets \emptyset$ for all $i\in N$\;
	$A^*=(A^*_1,\cdots, A^*_n)\gets $ Welfare-Maximizing Allocation\;
	Let $\pi$ be an ordering of the agents with $A^*_{\pi(1)}\geq A^*_{\pi(2)} \geq \cdots \geq A^*_{\pi(n)} $\;
	\For{$k=1$ to $n$}{
		Let $\mathcal{S}$ be the set of all $X\subseteq M$ such that:\\
		$\quad\bullet$ $v_j(X)\geq 3\alpha v_{\pi(k)} (A^*_{\pi(k)})$ for some $j$ with $B_j=\emptyset$,\\
		$\quad\bullet$ $X\subseteq A^*_i$ for some $i$, and\\
		$\quad\bullet$ $X\cap B_\ell=\emptyset$ for all $\ell$\\
		\If{$|\mathcal{S}|\neq\emptyset$}
		{$\hat X_k \gets \argmin_{X\in \mathcal{S}} |X|$\;
		$B_j\gets \hat X_k$;}
	}
	Apply the envy-cycles procedure of \citet{LMM04} to allocate the items in $M\setminus \bigcup_{i\in N}B_i$\;
	Reassign bundles $B=(B_1,\cdots,B_n)$ to the agents to maximize the sum of utilities. Call this allocation $A$\;
	Compute the natural transfers $(t_1,\cdots,t_n)$\;
	\Return{$(A,t)$}
	\caption{Envy-free allocation with high welfare and small transfers for general valuations}\label{alg: addUpperGen}
\end{algorithm}

	We first show the bound on the transfer payments. Let $X=(X_1,\cdots,X_n)$ be the partial allocation obtained when the $\mathtt{for}$ loop finishes in Algorithm~\ref{alg: addUpperGen}. 
	Note that by the ordering of the optimal allocation, and by minimality of the allocated sets, we have, for any pair $i,j$ of agents, $v_i(X_j)\leq 3\alpha \max_i v_i(A^*_i)+1$. Thus $v_i(X_j)-v_i(X_i)\leq 3\alpha \max_i v_i\left(A^*_i\right)+1$. 
	At this stage, applying the envy-cycles procedure of \cite{LMM04} does not increase the envy by more than one. Let $B=(B_1,\cdots,B_n)$ be the partial allocation obtained after this step. We therefore have $v_i(B_j)-v_i(B_i)\leq 3\alpha \max_i v_i(A^*_i)+2$. Now, by Lemma~\ref{lem: EnvyMatch}, we have that $(A,t)$ is envy-free and $\sum_{i\in N} |t_i|\leq 2n^2\left(3\alpha \max_i v_i\left(A^*_i\right)+2\right)$.
	
	In order to show that $\frac{\operatorname{SW}(A,t)}{\operatorname{SW}(A^*)}\geq \alpha$, it suffices to show $\frac{\operatorname{SW}(X)}{\operatorname{SW}(A^*)}\geq \alpha$: since we add items to $X$ to obtain $A$, we have $\sw(X)\leq \sw(A)$, and since introducing transfers does not affect utilitarian welfare, we have $\sw(A,t)=\sw(A)$. Let $S\subseteq N$ be the set of time steps in which a set was allocated 
	during the $\mathtt{for}$ loop .
	The welfare $\sw(X)$ then satisfies 
	\begin{align}\label{eq:S}
		\sw(X) \ \ge\  \sum_{k\in S} 3\alpha\cdot v_{\pi(k)}(A^*_{\pi(k)})
	\end{align}
	Next consider rounds $k\in N\setminus S$, that is, the rounds when a bundle is not allocated. Since agent $\pi(k)$ can otherwise be allocated the set $A^*_{\pi(k)}$, if no set is allocated in round $k$ then either agent $\pi(k)$ has already received a set $X_{\pi(k)}$ of value at least $ 3\alpha \cdot v_{\pi(k)}(A^*_{\pi(k)}) $ 
	or some other agent who came before her 
	 received a set $X_{f(k)}\subseteq A^*_{\pi(k)}$ of value at least $ 3\alpha \cdot v_{\pi(k)}(A^*_{\pi(k)}) $.
	Thus $\max \left[ v_{f(k)}(X_{f(k)}), v_{\pi(k)}(X_{\pi(k)}) \right] \ge \alpha \cdot v_{\pi(k)}(A^*_{\pi(k)})$ and so
	\begin{align}\label{eq:S-bar}
		\sum_{k\in N\setminus S} \alpha \cdot v_{\pi(k)}(A^*_{\pi(k)}) 
		&\ \le\ \sum_{k\in N\setminus S}  \max \left[ v_{f(k)}(X_{f(k)}), v_{\pi(k)}(X_{\pi(k)}) \right] \nonumber \\
		&\ \le\  \sum_{k\in N\setminus S}  v_{f(k)}(X_{f(k)}) + v_{\pi(k)}(X_{\pi(k)}) \nonumber \\
		&\ \le\  2\cdot \sw(X)
	\end{align}
	Summing (\ref{eq:S}) and (\ref{eq:S-bar}) immediately gives the utilitarian welfare guarantee.
	\begin{align*}\label{eq:3-bound}
		3\sw(X)
		&\ \ge\ \sum_{k\in S} 3\alpha\cdot v_{\pi(k)}(A^*_{\pi(k)})+\sum_{k\in N\setminus S} 3\alpha \cdot v_{\pi(k)}(A^*_{\pi(k)}) \\
		&\ =\ 3\alpha\sum_{k\in N} v_{\pi(k)}(A^*_{\pi(k)}) \\
		&\ =\ 3\alpha\cdot \sw(A^*) \qedhere
	\end{align*}
\end{proof}


\bibliographystyle{ACM-Reference-Format}
\bibliography{main.bib}

\end{document}